\ifdefined\fullversion
\else
\def\fullversion{0}    
\fi

\ifdefined\cameraversion
\else
\def\cameraversion{0}    
\fi

\def\allow{1}      

\documentclass[envcountsame,runningheads,notitlepage]{llncs}
\ifnum\fullversion=1
\fi

\usepackage{bm}
\usepackage{chngpage}
\usepackage{algpseudocode}
\usepackage{amsmath}
\usepackage{tikz}
\usepackage{pgfplots}
\pgfplotsset{compat=1.18}
\usetikzlibrary{arrows.meta}

\usepackage{float}
\usepackage{newfloat}
\DeclareFloatingEnvironment[
    fileext=los,
    listname=List of Schemes,
    name=Scheme,
    placement=H,
    within=section,
]{scheme}

\ifdefined\XeTeXversion
  \usepackage{pifont}
  \newcommand{\contactenvelope}{\ding{41}}
\else
  \usepackage{fontawesome5}
  \newcommand{\contactenvelope}{\faIcon[regular]{envelope}}
\fi

\newcommand{\emailicon}[1]{%
  \href{mailto:#1}{%
    \textsuperscript{\scriptsize\contactenvelope}%
  }%
}

\usepackage{adjustbox}
\usepackage{blindtext}
\usepackage{placeins}

\ifnum\cameraversion=0
\fi

\usepackage[utf8]{inputenc}
\usepackage[T1]{fontenc}
\usepackage{hyperref}
\usepackage{verbatim}
\usepackage{tikz}
\usetikzlibrary{positioning,calc,decorations.pathreplacing,backgrounds}
\usepackage{pgfplots}
\usepackage{pgfplotstable}
\pgfplotsset{compat=1.18}
\usepackage{subcaption}
\usepackage{forest}
\usepackage{xspace}
\usepackage{amsmath} 
\usepackage{amssymb}
\usepackage{mathtools}
\usepackage{pifont}
\usepackage{etoolbox}
\usepackage[normalem]{ulem}
\usepackage{booktabs}
\usepackage{longtable}
\usepackage{array}
\usepackage[capitalise,noabbrev]{cleveref}
\usepackage{cite}
\usepackage{multibib}
\usepackage{url}
\usepackage{algorithm}
\usepackage{algpseudocode}
\usepackage{paralist}
\usepackage{mathrsfs}
\usepackage{relsize}
\usepackage{stmaryrd}
\usepackage{multirow}
\usepackage[lambda,n,operators]{cryptocode}

\newtoggle{notes}
\toggletrue{notes} 

\newcommand{\F}{\mathbb{F}}

\makeatletter
\DeclareFontFamily{OMX}{MnSymbolE}{}
\DeclareSymbolFont{MnLargeSymbols}{OMX}{MnSymbolE}{m}{n}
\SetSymbolFont{MnLargeSymbols}{bold}{OMX}{MnSymbolE}{b}{n}
\DeclareFontShape{OMX}{MnSymbolE}{m}{n}{
    <-6>  MnSymbolE5
   <6-7>  MnSymbolE6
   <7-8>  MnSymbolE7
   <8-9>  MnSymbolE8
   <9-10> MnSymbolE9
  <10-12> MnSymbolE10
  <12->   MnSymbolE12
}{}
\DeclareFontShape{OMX}{MnSymbolE}{b}{n}{
    <-6>  MnSymbolE-Bold5
   <6-7>  MnSymbolE-Bold6
   <7-8>  MnSymbolE-Bold7
   <8-9>  MnSymbolE-Bold8
   <9-10> MnSymbolE-Bold9
  <10-12> MnSymbolE-Bold10
  <12->   MnSymbolE-Bold12
}{}

\let\llangle\@undefined
\let\rrangle\@undefined
\DeclareMathDelimiter{\llangle}{\mathopen}%
                     {MnLargeSymbols}{'164}{MnLargeSymbols}{'164}
\DeclareMathDelimiter{\rrangle}{\mathclose}%
                     {MnLargeSymbols}{'171}{MnLargeSymbols}{'171}
\makeatother

\newif\ifdraft
\ifnum\allow=1 \drafttrue \else \draftfalse \fi

\providecolor{Maroon}{HTML}{9E1B32}
\providecolor{Peach}{HTML}{E58F65}
\providecolor{SkyBlue}{HTML}{6FA8DC}    
\providecolor{Bittersweet}{HTML}{FE6F5E}
\providecolor{BenHuangGreen}{HTML}{2E8B57}
\providecolor{Bartosz}{HTML}{AAAC11}
\providecolor{FrancescoBlue}{HTML}{345995}
\providecolor{PierreLucOrange}{HTML}{FCA503}

\ifdraft
  \newcommand{\fixme}[1]{\noindent\colorbox{yellow}{\scriptsize\textbf{FIXME}}\,%
      \textbf{\textcolor{Maroon}{#1}}}
  \newcommand{\todo}[1]{\noindent\colorbox{yellow}{\scriptsize\textbf{TODO}}\,%
      \textbf{\textcolor{Maroon}{#1}}}
  \newcommand{\authornote}[3]{\colorbox{#2}{\scriptsize\textbf{#1}}\,%
      {\textcolor{#2}{#3}}}
  \newcommand{\manuel}[1]{\authornote{MANUEL}{Peach}{#1}}
  \newcommand{\jieyi}[1]{\authornote{JIEYI}{SkyBlue}{#1}}

  \newcommand{\benhuang}[1]{\authornote{BENHUANG}{BenHuangGreen}{#1}}
  
  \newcommand{\francesco}[1]{\authornote{FRANCESCO}{FrancescoBlue}{#1}}
  \newcommand{\pierreluc}[1]{\authornote{PIERRELUC}{PierreLucOrange}{#1}}

\else
  \newcommand{\fixme}[1]{}\newcommand{\todo}[1]{}
  \newcommand{\manuel}[1]{}\newcommand{\jieyi}[1]{}
  \newcommand{\benhuang}[1]{}
  \newcommand{\francesco}[1]{}
  \newcommand{\pierreluc}[1]{}
\fi

\providecommand{\ket}[1]{\ensuremath{\left\lvert #1\right\rangle}}

\providecommand{\challengename}{ECDSA.Fail}

\definecolor{arInk}{HTML}{22262E}   \definecolor{arMuted}{HTML}{6B7280}
\definecolor{arAccent}{HTML}{3D5A80} \definecolor{arAccentBg}{HTML}{E9EEF5}
\definecolor{arWarm}{HTML}{B07D3B}   \definecolor{arWarmBg}{HTML}{F7EEDE}
\definecolor{arBorder}{HTML}{C9CED6} \definecolor{arEdge}{HTML}{7A828E}

\newcommand{\freshBaseQ}{1{,}338}
\newcommand{\freshBaseTM}{1.128}

\newcommand{\freshBaseAny}{39}

\newcommand{\freshConservativeQ}{1{,}392}
\newcommand{\freshConservativeTM}{1.253}

\newcommand{\freshConservativeAny}{0}

\newcommand{\freshJumpQ}{1{,}162}
\newcommand{\freshJumpTM}{1.523}

\newcommand{\freshJumpAny}{215}

\title{Efficient Record-and-Replay Arithmetic for Quantum Elliptic-Curve Point Addition}
\titlerunning{Record-and-Replay Arithmetic for Quantum Point Addition}
\author{%
\small
Jieyi Long\inst{1}\emailicon{jieyi@thetalabs.org}, Theodore Pender\inst{2},
Zhao Huang\inst{3}, Manuel B. Santos\inst{4},
Samrendra Kumar Singh\inst{5}, Bartosz Naskr\k{e}cki\inst{6,7},
BitWonka\inst{8},
Pierre-Luc Dallaire-Demers\inst{9}, Francesco Giannicola,
Ruben M. L. Paschoarelli\inst{4}, Oli Freuler\inst{5},
Jackie Chia-Hsun Lee\inst{10}, Vasily Gnuchev, Gopi Kannappan, John Boyer,
Xavier Butler, Akash Balasubramani\inst{5}, Jordan Newman,
Bereket Dereje,
Alexander Hertlein\inst{11}, Robert Kodra\inst{2},
Lucas Levy\inst{5}, Shaan Patel, JT Rose, Matt Zweil,
Okechukwu Wisdom\inst{12}, Tarek El-Eter\inst{2}, Edison Lee, Michael Dong\inst{3},
Alan Li\inst{3}, Anto Joseph\inst{13}, Duy Nguyen\inst{13},
Other ECDSA.Fail Leaderboard Contributors%
\thanks{See Appendix~\ref{app:credit-contributors} for the complete contributor list.}, 
Gajesh Naik\inst{13}\emailicon{gajesh@eigenlabs.org},
Gautham Anant\inst{13}\emailicon{gautham@eigenlabs.org},
Soubhik Deb\inst{13}\emailicon{soubhik@eigenlabs.org},
Justin Drake\inst{14}\emailicon{justin@ethereum.org}%
}

\authorrunning{J. Long et al.}

\institute{%
\small
\textsuperscript{1}Theta Labs \quad
\textsuperscript{2}Starknet Foundation \quad
\textsuperscript{3}Brevis \quad
\textsuperscript{4}MultiVM Labs \\
\textsuperscript{5}StarkWare \quad
\textsuperscript{6}Adam Mickiewicz University Pozna\'{n} \quad
\textsuperscript{7}Warsaw University of Technology \quad
\textsuperscript{8}Octav \quad
\textsuperscript{9}Pauli Group \quad
\textsuperscript{10}ScienceVR \\
\textsuperscript{11}Sei Labs \quad
\textsuperscript{12}Stanford Free Systems Lab \quad
\textsuperscript{13}Eigen Labs \quad
\textsuperscript{14}Ethereum Foundation%
}
\date{September 2026}
\hypersetup{
  hidelinks
}

\begin{document}
\raggedbottom

\maketitle
\vspace{-0.3cm}

\begin{abstract}
We study reversible \texttt{secp256k1} point-addition circuits developed through {\challengename} for Shor's elliptic-curve discrete-logarithm algorithm. Two complementary constructions improve record-and-replay GCD algorithm: Jump-2 groups binary-GCD steps and compresses their decisions using base-$5$ encoding, while ping-pong uses fixed register alternation and one-bit decisions to avoid full-register-width comparisons and data-dependent swaps. Fused replay combines doubling and signed addition into one modular correction. Both constructions support quantum-addressed window selection with measurement-based lookup cleanup. We compare three circuits on $100{,}000$ fresh inputs across nine lookup-table configurations. A separately tested repair uses $1{,}419$ qubits and $1.356$ million mean executed Toffolis, with no detected failures on another $100{,}000$ inputs. We also provide conditional coherent-error analysis and reversible safegcd comparisons. Under the stated window allowance, repaired-circuit resources lie below Google's low-gate caps \cite{google26} and Schrottenloher's low-gate estimates \cite{s26}, but differing accounting and correctness evidence preclude formal dominance. The results concern individual window-selected additions, not complete Shor computations.


\end{abstract}

\ifnum\cameraversion=0
  \begingroup
  \makeatletter
  \let\l@title\@gobbletwo
  \let\l@author\@gobbletwo
  \makeatother
  \endgroup
\fi

\section{Introduction}
\label{sec:intro}

This paper studies circuits developed through {\challengename}\footnote{Challenge website: \url{https://ecdsa.fail}. Source repository: \url{https://github.com/Layr-Labs/ecdsafail-challenge}.}, a public research challenge for reducing the quantum resources required for elliptic-curve point addition on \texttt{secp256k1} via collaboration between humans and AI agents. Point addition is performed repeatedly in Shor's algorithm for the elliptic-curve discrete logarithm problem and accounts for much of its arithmetic cost~\cite{shor97,proos-zalka03}. 

Our effort focuses on modular inversion, the most expensive component of the point-addition circuit. We implement inversion using a reversible extended Euclidean algorithm (EEA), which combines a greatest common divisor (GCD) computation with coefficient updates. An efficient approach for modular inversion is the record-and-replay GCD~\cite{ksgz25,s26}, where the circuit first records the decisions made during the GCD computation in a quantum transcript and then reuses that transcript to update the field-valued registers. We study two complementary ways to reduce the resulting resource cost. The Jump-2 construction combines consecutive binary-GCD rounds and compresses their control information, reducing the number of transcript qubits. The comparison-free \emph{ping-pong} construction takes a different approach: it alternates the updated register according to a fixed schedule and records one bit indicating whether each round adds or subtracts the other operand. This produces a longer transcript but eliminates full-width qubit register comparisons and data-dependent swaps, which simplifies both the Euclidean updates and the subsequent field arithmetic. We further reduce replay cost by combining doubling and signed addition into a single modular operation.


\subsection{Main Contributions}
\label{sec:contributions}

\begin{itemize}
\item \textbf{Compressed record-and-replay GCD.} We describe and evaluate the Jump-2 construction, which combines consecutive binary-GCD operations into larger reversible steps and compresses their control decisions using a base-$5$ transcript encoding (\Cref{sec:jump-two-gcd}). Its $261$-step Jump-2 GCD schedule requires $609$ transcript qubits, compared with storing every binary decision separately. Together with in-place register reuse and specialized arithmetic for \texttt{secp256k1}, this construction gives the historical mixed-addition operating point of $1{,}151$ logical qubits and $1{,}299{,}453$ mean executed Toffolis.
\item \textbf{A comparison-free ping-pong GCD.} As an alternative to compressing the decisions of a comparison-based GCD computation, we introduce the ping-pong construction, which removes full-width qubit register comparisons and data-dependent swaps altogether. The circuit alternates the register updated in each round according to a fixed schedule and records one bit indicating whether the round adds or subtracts the other operand. We prove that each round is reversible and preserves odd operands (\Cref{sec:ping-pong-dialog-gcd}). During replay, the recorded decisions control the corresponding updates of the field-valued registers. We reduce this cost by combining modular doubling and signed addition into a single replay operation (\Cref{sec:fused-modular-replay}). At matched correction width, an isolated inverse-replay cell uses $393$ rather than $446$ static Toffolis, with one additional workspace qubit. In a controlled backend comparison that fixes the Karatsuba square and disables post-emission optimization passes, ping-pong saves $340{,}442$ static Toffolis while increasing the circuit width from $1{,}150$ to $1{,}321$ qubits.
\item \textbf{Adaptation for quantum-addressed windowed point addition.} With modest additional overhead, we were able to adapt both Jump-2 and ping-pong to an interface in which a quantum address selects an elliptic-curve point from a classically precomputed table, similar to the settings of prior work~\cite{s26,google26}. Measurement-based uncomputation clears the temporary lookup data while preserving coherence~\cite{gidney19windowed}. This cleanup reduces the mean executed Toffoli count by $9.56\%$ for Jump-2 and $12.50\%$ for ping-pong without increasing peak width (\Cref{sec:windowed-addition-compatible-circuit}).
\item \textbf{Validation, targeted repair, and scope.} The three-circuit frozen study tests $100{,}000$ fresh inputs across nine lookup-table configurations. A separate targeted repair removes the tested zero-payload and zero-slope failures at $1{,}419$ qubits and $1.356$ million mean executed Toffolis, with no detected failures on another frozen $100{,}000$-input study. Explicit supported inputs still violate the round and width schedules. We provide a conditional coherent-error composition bound and executable full-width safegcd references to distinguish these empirical operating points from stronger correctness guarantees (\Cref{sec:targeted-repair-evaluation,sec:state-dependent-composition,app:safegcd-reference}).
\end{itemize}

Detailed resource accounting shows that the savings come primarily from simplifying the Euclidean value updates and the field arithmetic performed during replay (\Cref{sec:mechanism-effects}). After adapting the circuit to quantum-addressed window selection and using more conservative arithmetic parameters, ping-pong reaches a favorable resource operating point relative to previously published constructions. Under the leading-order window allowance used in the external studies, its reported qubit width and counted work are below those of Schrottenloher's low-gate construction and Google's published low-gate caps (\Cref{sec:external-pa-comparison}). These comparisons provide useful context but do not establish formal dominance because the circuits use different interfaces, correctness evidence, and resource-accounting conventions.


The companion challenge report~\cite{long26ecdsafail} describes the broader research process and leaderboard results. This paper focuses on the circuit mechanisms, controlled comparisons, and correctness analysis, including the targeted repair and reference constructions evaluated here. The observational challenge record does not isolate the causal contribution of AI agents (\Cref{sec:challenge-and-benchmark}).

\subsection{Related Work}
\label{sec:related-work}

Proos and Zalka developed detailed quantum circuits for prime-field ECDLP, while Kaye and Zalka studied space-efficient elliptic-curve arithmetic over binary fields~\cite{proos-zalka03,kaye-zalka05}. Subsequent work improved reversible arithmetic and windowed implementations~\cite{rnsl17,hjn20,litinski23}. Babbush et al.\ report point-addition resource bounds supported by zero-knowledge proofs, while Schrottenloher provides an explicit circuit construction~\cite{google26,s26}. The record-and-replay Euclidean architecture was introduced by Khattar and Gidney et al.~\cite{ksgz25} and later adapted to ECDLP point addition by Schrottenloher~\cite{s26}. 

Low-bit plus-minus GCD is an established classical technique~\cite{bigou-tisserand15}. Safegcd also avoids full magnitude comparisons by using parity, an auxiliary state $\delta$, and conditional signed swaps~\cite[Section~8]{bernstein-yang19}. Ping-pong differs by keeping both operands odd, fixing which register is updated in each round, and recording a single sign decision. The contribution is this organization of reversible control and field-arithmetic replay. Safegcd's termination bound does not apply to ping-pong. Our full-width references use shared payload primitives (\Cref{app:safegcd-reference}), but do not establish a resource advantage over optimized reversible safegcd.

The coefficient identities used by ping-pong apply over odd prime fields, while its inexpensive modular corrections additionally exploit the sparse modulus of \texttt{secp256k1}. Karatsuba squaring and measurement-based lookup cleanup are established techniques~\cite{parent2017karatsuba,gidney19windowed}. The research process is also related to generate-and-evaluate approaches to scientific search~\cite{alphatensor-quantum,funsearch,alphaevolve}.

\section{Preliminaries}
\label{sec:prelims}

\subsection{Reversible Quantum Circuits}
\label{sec:qc-background}

Quantum arithmetic must be implemented reversibly: a circuit must preserve enough information to reconstruct its input and return temporary workspace to its initial state. This requirement often introduces additional registers and operations compared with ordinary classical arithmetic. Some temporary values can instead be cleared using measurement-based uncomputation~\cite{bennett73,gidney18}. Because such cleanup must also remove any input-dependent phase, producing the correct classical output alone does not establish that the circuit behaves correctly within a larger quantum computation.

\subsection{Elliptic-Curve Arithmetic}
\label{sec:ecc-quantum}

We use \texttt{secp256k1}, $E:y^2=x^3+7$ over $\F_p$, with $p=2^{256}-2^{32}-977$~\cite{sec2}. Its group has prime order $r$ and generator $G$. Throughout, $n=256$ is the field bit-width. Let $R=(x_R,y_R)$ be the accumulator, $A=(x_A,y_A)$ the addend, and $R'=R+A$. For finite points with $x_R\neq x_A$, the affine formulas are
\begin{equation}
  \begin{aligned}
    \lambda&=(y_R-y_A)(x_R-x_A)^{-1},\\
    x_{R'}&=\lambda^2-x_R-x_A,\\
    y_{R'}&=\lambda(x_A-x_{R'})-y_A.
  \end{aligned}
  \label{eq:group-law}
\end{equation}
All coordinate arithmetic is modulo $p$. ~\cref{alg:point-addition} provides the algorithm for in-place addition of two elliptic curve points using a quantum circuit. The in-place constructions additionally require $x_A\neq x_{R'}$, so their generic domain is $R,A\neq\mathcal O$ and $R\notin\{A,-A,-2A\}$.

Define $d_x=x_R-x_A$ and $d_y=y_R-y_A$. The slope can be recovered from either side of the addition~\cite{proos-zalka03}:
\begin{equation}
  \lambda=d_y d_x^{-1}
  =\frac{y_{R'}+y_A}{x_A-x_{R'}}\pmod p.
  \label{eq:output-slope}
\end{equation}
This identity allows the circuit to reconstruct the slope from the output coordinates and uncompute the associated temporary values without retaining both input coordinates. It is therefore a key ingredient in implementing point addition in place.


\begin{algorithm}[t]
  \caption{In-place generic addition of a classically specified addend $A$ to a quantum accumulator $R$, realizing $R'=R+A$. Registers $X$ and $Y$ hold field elements and are recycled in place.}
  \label{alg:point-addition}
  \begin{algorithmic}[1]
    \Statex \textbf{Input:} $R=(x_R,y_R)\in E(\F_p)$ and $A=(x_A,y_A)\in E(\F_p)$.
    \Statex \textbf{Assumptions:} $R,A\neq\mathcal O$ and $R\notin\{A,-A,-2A\}$; equivalently, $x_R\neq x_A$ and $x_A\neq x_{R'}$, where $R'=R+A$.
    \Statex \textbf{Output:} The coordinate registers contain $R'=(x_{R'},y_{R'})=R+A$.
    \State $X \gets x_R-x_A$;\quad $Y \gets y_R-y_A$
      \Comment{$X=d_x,\;Y=d_y$}
    \State $Y \gets YX^{-1}$
      \Comment{$Y=\lambda$}
    \State $X \gets X+3x_A$
      \Comment{$X=x_R+2x_A$}
    \State $X \gets X-Y^2$
      \Comment{$X=x_A-x_{R'}$}
    \State $Y \gets YX$
      \Comment{$Y=\lambda(x_A-x_{R'})$}
    \State $Y \gets Y-y_A$
      \Comment{$Y=y_{R'}$}
    \State $X \gets x_A-X$
      \Comment{$X=x_{R'}$}
  \end{algorithmic}
\end{algorithm}

\subsection{Windowed Point-Addition Interfaces}
\label{sec:windowed-point-addition-interfaces}

For $P=[k]G$, Shor's algorithm recovers $k$ by Fourier sampling
\begin{equation}
  f(u,v)=[u]G+[v]P=[u+kv]G,
  \label{eq:probe}
\end{equation}
whose period direction is $(-k,1)$~\cite{shor97,proos-zalka03}. Its arithmetic consists of two scalar multiplications. A semiclassical Fourier implementation reuses control qubits, making the arithmetic workspace a principal storage cost.

For either $(s,B)=(u,G)$ or $(v,P)$, a bitwise implementation of $[s]B$ adds $A_i=[2^i]B$ under control of $s_i$:
\begin{equation}
  U_{A_i}:\ket{s_i}\ket{R}\longmapsto\ket{s_i}\ket{R+[s_i]A_i}.
  \label{eq:mixed-add}
\end{equation} 

Windowing instead groups scalar bits into $w$-bit words~\cite{s26, google26}. For a window beginning at bit $i$, a classical table $\mathcal T_B^{(i)}[j]=[j2^i]B$ defines
\begin{equation}
  U_{\mathcal T_B^{(i)}}:\ket{j}\ket{R}
  \longmapsto\ket{j}\ket{R+\mathcal T_B^{(i)}[j]}.
  \label{eq:windowed-add}
\end{equation}
A QROM lookup prepares $\ket{j}\ket{\mathcal T_B^{(i)}[j]}$. The selected coordinates are quantum data and must be uncomputed after use. At $n=256$ and $w=16$, the two scalar multiplications contain $32$ conceptual additions. Direct initialization and classical postprocessing reduce the schedule used in prior resource estimates to $28$ calls~\cite{google26,litinski23}. These schedule savings do not establish correctness of a particular approximate implementation.

\section{The {\challengename} Benchmark and Open Autoresearch}
\label{sec:challenge-and-benchmark}

\subsection{Point-Addition Task and Resource Model}
\label{sec:task}
\label{sec:cost}

The benchmark used by {\challengename} supplies a quantum accumulator in two $256$-qubit registers and a classical addend $A$, and targets
\begin{equation}
  U_A:\ket{x_R}\ket{y_R}
  \longmapsto\ket{x_{R'}}\ket{y_{R'}},
  \qquad R'=R+A.
  \label{eq:task}
\end{equation}
As required by reversibility, after the computation finishes, temporary registers must return to zero. The constructions use the generic-affine domain of \Cref{sec:ecc-quantum}. The evaluator filters identity inputs and $R=\pm A$, but does not separately filter $R=-2A$. Under random-scalar sampling this additional exception has probability approximately $1/r$. It remains outside the arithmetic correctness claim even if absent from a finite test.

Circuits use the restricted kickmix model~\cite{google26}, including reversible permutation gates, diagonal phase gates, reset, and destructive $X$-basis measurement with classical feed-forward. Unrestricted Hadamard gates and persistent $\ket{+}$ preparation are excluded. This permits efficient basis-input simulation of the arithmetic and its phase and ancilla checks, rather than general quantum computation~\cite{aharonov03}.

Peak logical width $Q$ is the maximum number of simultaneously live qubits. Average executed Toffoli count $T$ is the empirical mean over the evaluation inputs, counting each executed CCX or CCZ as one Toffoli-equivalent operation. It is not the static gate count or circuit depth. The primary objective is
\begin{equation}
  S=Q\times T.
  \label{eq:score}
\end{equation}
This spacetime-inspired score captures a storage--work trade-off but omits routing, parallelism, error correction, and other architecture-dependent costs. Additional leaderboard tracks explored alternative resource objectives but are not analyzed further here, and the observed operating points do not establish an optimal trade-off.

\subsection{Validation and Approximation}
\label{sec:evaluator}

The evaluator derives $9{,}024$ pseudorandom inputs from a SHAKE256 hash of the serialized circuit~\cite{google26}. Each input is checked for correct affine output (\texttt{cls}), clean ancillas (\texttt{anc}), and absence of unwanted relative phase (\texttt{pha}). Passing these checks is finite-sample evidence, not an all-input proof.

The implementations permit a small error rate through finite iteration schedules, narrowed comparisons, and truncated arithmetic, similar to prior work~\cite{s26,google26}. These choices are distinct from exact transformations under stated preconditions. Moreover, adding identity operations can change the circuit hash without changing the arithmetic. Searching such nonces can select a favorable validation set and alter both acceptance and the reported mean cost. We therefore separate source changes and static counts from changes in sampled execution.

Test inputs generated independently of the circuit hash complement the official evaluation. The first shared-corpus comparison is exploratory because its inputs had previously been studied (\Cref{sec:cleanup-ablation}). The subsequent independent validation study freezes its circuits and analysis before generating fresh inputs (\Cref{sec:bounded-validation}). The quantity $Q\times T/\hat p$, where $\hat p$ is an empirical success probability, is used only as a per-call sensitivity proxy. It does not describe retrying an addition within a coherent Shor computation.

\subsection{Collaborative Circuit Optimization}
\label{sec:autoresearch-paradigm}

\textit{Open Autoresearch} coordinates independent human--agent teams through a shared evaluator, repository, and leaderboard \cite{liu25autoresearch, karpathy26autoresearch, hyperspace26agi, long26ecdsafail}. {\challengename} is an example of Open Autoresearch, where the first eight weeks involved more than one hundred participants and four hundred promoted submissions, exposing source, measurements, contributor attribution, and declared AI models. Teams used isolated worktrees, staged tests, and separate output, phase, and ancilla diagnostics. Public changes spread across teams without requiring all experiments to follow one branch.

Human research remained essential. For example, the challenge adopted the record-and-replay architecture introduced by Khattar and Gidney et al.~\cite{ksgz25} after Schrottenloher published his adaptation of the algorithm to ECDLP~\cite{s26}. Participants selected their own tools and budgets, and many attempts remained private. This observational record does not isolate an agent advantage. Detailed workflows and attribution remain in the \href{https://github.com/jieyilong/ecdsafail-autoresearch-harness}{contributed harness repository} and the long-form account.

The remainder of the paper describes the circuit optimizations developed through this Open Autoresearch process, their adaptation to windowed Shor, and their empirical evaluation. \textbf{\Cref{sec:gate-efficient-circuits}} presents the in-place point-addition architecture and its principal optimizations, including Jump-2 transcript compression, the comparison-free ping-pong construction, fused replay, and specialized modular arithmetic. \textbf{\Cref{sec:windowed-addition-compatible-circuit}} adapts Jump-2 and ping-pong to quantum-addressed windowed addition and explains how measurement-based uncomputation clears temporary lookup data. \textbf{\Cref{sec:results}} evaluates the resulting circuits through controlled ablations, source-level resource accounting, and frozen validation on fresh inputs, and compares their operating points with prior work. 

\section{Gate-Efficient Point-Addition Circuits}
\label{sec:gate-efficient-circuits}

We first describe two implementations of the same in-place point-addition algorithm. The Jump-2 construction reduces transcript storage through base-$5$ encoding. The comparison-free ping-pong construction instead stores a longer transcript whose symbols control simpler arithmetic. While both constructions build on the record-and-replay dialog-GCD architecture~\cite{ksgz25,s26}, they improve it in different ways. Jump-2 directly extends the original design by grouping consecutive binary-GCD operations into larger steps and compressing their control decisions, whereas ping-pong reorganizes the Euclidean control flow itself by replacing full-width qubit register comparisons and data-dependent swaps with fixed register alternation and a single sign bit per ordinary round.

\subsection{In-Place Point Addition and Dialog-GCD}
\label{sec:best-QxT-scoring-circuit}

\Cref{fig:best-qxt-circuit-pipeline} describes the in-place computation of point addition $R'=R+A$ following the procedure described in~\Cref{alg:point-addition}. After forming $d_x=x_R-x_A$ and $d_y=y_R-y_A$, the circuit replaces $Y=d_y$ by the slope $\lambda=d_y d_x^{-1}$. The next two updates give $X=d_x+3x_A-\lambda^2=x_A-x_{R'}$. Multiplication then produces $Y=\lambda(x_A-x_{R'})=y_{R'}+y_A$, allowing both output coordinates to be recovered in place. The addend $A$ is supplied classically in the benchmark and is materialized only during the coordinate updates that require it.


\begin{figure}[tbp]
  \centering
  \resizebox{\linewidth}{!}{%
  \begin{tikzpicture}[
    x=1cm,y=1cm,
    wire/.style={line width=0.6pt},
    stage/.style={draw,line width=0.6pt,fill=white,rounded corners=1pt,
      text width=2.85cm,minimum height=1.7cm,inner sep=4pt,align=center,font=\fontsize{8}{9.5}\selectfont},
    gcdstage/.style={stage,fill=black!3},
    access/.style={draw,line width=0.5pt,fill=orange!8,rounded corners=1pt,
      text width=2.6cm,inner sep=4pt,align=center,font=\fontsize{8}{9.5}\selectfont},
    lab/.style={font=\fontsize{8}{9.5}\selectfont},
    accessflow/.style={->,dashed,line width=0.5pt}
  ]
    \node[lab,anchor=west] at (0,6.0) {Classical addend $A=(x_A,y_A)$};
    \draw[wire] (0,4.10) -- (11.20,4.10) -- (11.20,2.72)
      -- (0,2.72) -- (0,1.55) -- (11.25,1.55);
    \draw[wire] (0,3.45) -- (11.48,3.45) -- (11.48,2.43)
      -- (-0.25,2.43) -- (-0.25,0.90) -- (11.25,0.90);
    \node[stage] (s1) at (1.95,3.80)
      {\textbf{1. Differences}\\[2pt]
       $X\gets d_x=x_R-x_A$\\
       $Y\gets d_y=y_R-y_A$};
    \node[gcdstage] (s2) at (5.65,3.80)
      {\textbf{2. Divide}\\[2pt]
       $Y\gets d_y d_x^{-1}=\lambda$\\
       Restore $X=d_x$\\
       Clean workspace};
    \node[stage] (s3) at (9.35,3.80)
      {\textbf{3. Prepare $X$}\\[2pt]
       $X\gets d_x+3x_A$\\
       $=x_R+2x_A$\\
       Preserve $Y=\lambda$};
    \node[stage] (s4) at (1.95,1.20)
      {\textbf{4. Square}\\[2pt]
       $X\gets X-\lambda^2$\\
       $=x_A-x_{R'}$\\
       Preserve $Y=\lambda$};
    \node[gcdstage] (s5) at (5.65,1.20)
      {\textbf{5. Multiply}\\[2pt]
       $Y\gets\lambda(x_A-x_{R'})$\\
       Restore $X=x_A-x_{R'}$\\
       Clean workspace};
    \node[stage] (s6) at (9.35,1.20)
      {\textbf{6. Output}\\[2pt]
       $Y\gets Y-y_A=y_{R'}$\\
       $X\gets x_A-X=x_{R'}$};
    \node[access] (a1) at (1.95,5.25) {Load and unload\\$A$};
    \node[access] (a3) at (9.35,5.25) {Load and unload\\$3x_A$};
    \node[access] (a6) at (9.35,-0.25) {Load and unload\\$A$};
    \draw[accessflow] (a1.south) -- (s1.north);
    \draw[accessflow] (a3.south) -- (s3.north);
    \draw[accessflow] (a6.north) -- (s6.south);
    \node[lab,anchor=east] at (-0.10,4.10) {$X:\ket{x_R}$};
    \node[lab,anchor=east] at (-0.10,3.45) {$Y:\ket{y_R}$};
    \node[lab,anchor=west] at (11.35,1.55) {$\ket{x_{R'}}$};
    \node[lab,anchor=west] at (11.35,0.90) {$\ket{y_{R'}}$};
  \end{tikzpicture}%
  }
  \caption{Six-stage pipeline of the Jump-2 mixed-addition circuit \texttt{8e9c9a2}. The two wires are the recycled coordinate registers. Classical addend data is materialized only where needed. The two shaded stages share dialog-GCD machinery but use different denominators. The same point-addition identities underlie ping-pong. Arithmetic is modulo $p$ on the generic domain $R,A\neq\mathcal O$, $R\notin\{A,-A,-2A\}$.}
  \label{fig:best-qxt-circuit-pipeline}
\end{figure}

Among these steps, the slope calculation $\lambda=d_y d_x^{-1}$ requires the modular inverse of $d_x$. This inverse can be computed using the extended Euclidean algorithm applied to $p$ and the canonical integer representative of $d_x$. A direct reversible implementation would update two remainders together with the full-width B\'ezout coefficients used to recover the inverse. Because the inputs may be in superposition, comparisons, swaps, and arithmetic branches must be performed coherently and later uncomputed. Keeping the remainders, coefficients, and control information live at the same time therefore requires substantial quantum workspace.

Dialog-GCD reduces this cost by separating the computation into a record phase and a replay phase~\cite{ksgz25,s26}. As shown in \Cref{alg:dialog-gcd-record}, the record phase applies a binary Euclidean reduction to an operand pair $(u,v)$ initialized as $(p,x)$. In each round, the circuit first records whether $v$ is odd in a bit $b_i$. If $v$ is odd and $u>v$, a second bit $s_i$ records that the operands must be swapped. The circuit then conditionally swaps the operands, subtracts $u$ from $v$ when $b_i=1$, and divides $v$ by two. The subtraction makes $v$ even, so the final division is exact. The two-bit symbols $\sigma_i=(b_i,s_i)$ form the transcript $\tau$. A fixed round budget makes the schedule independent of the input, with inactive rounds padded after the pair reaches $(1,0)$.

\begin{algorithm}[t]
  \caption{Dialog-GCD record phase, based on Schrottenloher~\cite[Alg.~2]{s26}.}
  \label{alg:dialog-gcd-record}
  \begin{algorithmic}[1]
    \Statex \textbf{Input:} $(u,v)=(p,x)$ with $1\leq x<p$; fixed round budget $L_{\mathrm{div}}$
    \Statex \textbf{Output:} coherent transcript $\tau=(\sigma_0,\ldots,\sigma_{L_{\mathrm{div}}-1})$
    \State $\tau\gets()$
    \For{$i=0,\ldots,L_{\mathrm{div}}-1$}
      \State $b_i\gets[v\text{ is odd}]$
        \Comment{subtraction flag}
      \State $s_i\gets b_i\wedge[u>v]$
        \Comment{pre-subtraction swap flag}
      \State $\sigma_i\gets(b_i,s_i)$;\quad
             $\tau\gets\tau\mathbin{\Vert}\sigma_i$
      \If{$s_i$}
        \State $(u,v)\gets(v,u)$
      \EndIf
      \If{$b_i$}
        \State $v\gets v-u$
      \EndIf
      \State $v\gets v/2$
    \EndFor
    \Statex \textbf{Successful endpoint:} $(u,v)=(1,0)$
  \end{algorithmic}
\end{algorithm}

For example, consider inversion of $x=d_x=5$ modulo $p=13$. In the first round, $v=5$ is odd and $u=13>5$, so the algorithm records $\sigma_0=(1,1)$, swaps the operands, subtracts, and halves:
\begin{equation}
  (13,5)\longmapsto(5,13)\longmapsto(5,8)
  \longmapsto(5,4).
\end{equation}
Applying the same rules in the following rounds gives
\begin{equation}
  (13,5)\longrightarrow(5,4)\longrightarrow(5,2)
  \longrightarrow(5,1)\longrightarrow(1,2)
  \longrightarrow(1,1)\longrightarrow(1,0),
  \label{eq:dialog-gcd-example}
\end{equation}
and produces the active transcript
\begin{equation}
  \tau=\bigl((1,1),(0,0),(0,0),(1,1),(0,0),(1,0)\bigr).
  \label{eq:dialog-gcd-example-transcript}
\end{equation}
For instance, the symbols $(0,0)$ correspond to rounds that only halve an even second operand, while the final symbol $(1,0)$ records subtraction without a preceding swap. Replaying this transcript on $(0,1)$ produces $(8,0)$. Its first component is the desired inverse because $5^{-1}=8\pmod{13}$. The transcript therefore retains the control information needed to recover the inverse without keeping full-width B\'ezout coefficients throughout the Euclidean reduction.

More generally, the division and multiplication stages (Stages 2 and 5 in~\Cref{fig:best-qxt-circuit-pipeline}) use the same dialog-GCD machinery. For a nonzero field element $x$, the swaps, subtractions, and divisions by two specified by $\tau$ define an invertible linear map $D_x$ over $\F_p$. For a successful path, $D_x(p,x)=(1,0)$. Since $p=0$ in $\F_p$, this also gives $D_x(0,x)=(1,0)$. By linearity,
\begin{equation}
  D_x(0,y)=(yx^{-1},0),
  \qquad
  D_x^{-1}(y,0)=(0,yx).
  \label{eq:jump-two-linear}
\end{equation}
The first identity implements division by replaying the transcript, while the second implements multiplication by reversing the same operations. The circuit then reverses the Euclidean record phase to restore the operand register and clear the transcript and its workspace.

\subsection{Jump-2 Steps and Base-5 Transcript Encoding}
\label{sec:jump-two-gcd}
\label{sec:base-five-codec}

An ordinary dialog-GCD step performs one mandatory halving. If the resulting operand remains even, another step is needed to halve it again. Jump-2 includes a possible second halving in the same recorded step, allowing the Euclidean reduction to make more progress per transcript symbol.

The preceding example with $p=13$ and $x=d_x=5$ illustrates this opportunity. An ordinary dialog-GCD step swaps the operands, subtracts, and performs the mandatory halving:
\begin{equation}
  (13,5)
  \xrightarrow{\mathrm{swap,\ subtract}}
  (5,8)
  \xrightarrow{/2}
  (5,4).
  \label{eq:ordinary-dialog-example}
\end{equation}
Because the second operand remains even, ordinary dialog-GCD would require another step to halve it again. Jump-2 instead includes this second halving in the same step:
\begin{equation}
  (13,5)
  \xrightarrow{\mathrm{swap,\ subtract}}
  (5,8)
  \xrightarrow{/2}
  (5,4)
  \xrightarrow{/2\;(s_2=1)}
  (5,2).
  \label{eq:jump-two-example}
\end{equation}
The bit $s_2$ records whether the additional halving occurred. By reducing the number of GCD steps along many Euclidean paths, this construction can shorten the transcript and lower the number of qubits required to store it.

Formally, let $(u,v)$ denote the Euclidean operand pair, rather than the Shor exponent registers. After a special first iteration that handles the unknown parity of $x$, every regular Jump-2 iteration begins with $u$ odd and $v$ even. It first halves $v$ and performs a second halving when the first quotient remains even:
\begin{equation}
  v\longmapsto
  \begin{cases}
    v/2, & v/2\text{ is odd},\\
    v/4, & v/2\text{ is even}.
  \end{cases}
  \label{eq:jump-two-core}
\end{equation}
If the resulting $v$ is odd, the circuit compares the operands, swaps them when $v<u$, and subtracts $u$ from $v$. Since both operands are odd, their difference is even and establishes the boundary for the next regular iteration. No further halving is performed in the current iteration. If $v$ remains even after the halving phase, no subtraction or swap is needed.

Each regular Jump-2 step records a symbol $\sigma_i=(b_i,s_i,s_{2,i})$. The bit $s_{2,i}$ is one exactly when $v$ remains even after the mandatory first halving, in which case the circuit halves it again. After the halving phase, $b_i$ records whether the resulting operand is odd and therefore requires subtraction, while $s_i$ records whether the operands are swapped before that subtraction. If $s_{2,i}=0$, the result of the first halving is odd, so $b_i=1$. If $s_{2,i}=1$, the result after the second halving may be either even or odd. Finally, a swap can occur only when subtraction is active, so $s_i=1$ implies $b_i=1$. These constraints leave five reachable symbols:
\begin{equation}
  \mathcal S=\{(0,0,1),(1,0,0),(1,1,0),(1,0,1),(1,1,1)\}.
  \label{eq:jump-two-alphabet}
\end{equation}

A bijection $\phi:\mathcal S\to\{0,1,2,3,4\}$ encodes each reachable symbol as a base-$5$ digit. Three consecutive Jump-2 GCD steps fit into seven qubits because
\begin{equation}
  z=d_0+5d_1+25d_2<125<2^7.
  \label{eq:base-five-value}
\end{equation}
The resulting reversible encoding is
\begin{equation}
  \ket{\sigma_0}_3\ket{\sigma_1}_3\ket{\sigma_2}_3
  \longmapsto
  \ket{z}_7\ket{0}^{\otimes2}.
  \label{eq:base-five-unitary}
\end{equation}
The injective assignment on reachable symbols extends to a permutation of the full nine-qubit computational basis. It returns two qubits to zero, and its inverse recovers the three symbols during reverse replay.

The $261$-step schedule encodes the special initial step in two qubits. Its remaining $260$ symbols are stored in $86$ seven-qubit blocks containing three symbols each and one five-qubit block containing the final two symbols. The complete transcript therefore occupies
\begin{equation}
  2+86\mathbin{\times}7+5=609
\end{equation}
qubits. Encoding uses workspace released as the active Euclidean operands shrink, and reverse replay decodes and clears one block at a time. The base-$5$ codec is exact on the reachable alphabet. Approximation enters through the finite step budget and the truncated arithmetic used elsewhere in the implementation.

\subsection{Comparison-Free Ping-Pong Dialog-GCD}
\label{sec:ping-pong-dialog-gcd}

The Jump-2 construction reduces transcript storage, but its Euclidean value updates retain the decision logic of ordinary dialog-GCD. A typical round compares the two full-width operands, conditionally swaps them so that subtraction proceeds in the required order, performs the subtraction, and then divides the updated operand by two. In a reversible circuit, the comparison result must be computed and later uncomputed, while the controlled swap acts across the full active width of both operands. These operations account for a substantial part of the Toffoli cost.


The ping-pong GCD addresses these costs through a fixed-schedule Euclidean computation that avoids full-width operand comparisons and data-dependent swaps. It combines low-bit plus-minus GCD methods~\cite{bigou-tisserand15} with dialog replay~\cite{ksgz25,s26}, with the pseudocode provided in \Cref{alg:ping-pong-dialog-gcd}. Rather than comparing the operands' magnitudes, the circuit examines only their two least significant bits. If the operands have the same residue modulo four, it adds them; otherwise, it subtracts them. In either case, the resulting numerator is congruent to two modulo four, so dividing it by two produces another odd operand. 

For example, still consider our running example where $p=13$ and $x=d_x=5$. Starting from $(13,5)$, both operands $13\equiv5\equiv1\pmod4$, so the second operand is updated to $(5+13)/2=9$, giving $(13,9)$ and transcript bit $e=0$. The target then alternates to the first operand. Since $13$ and $9$ again have the same residue, the next update gives $(13+9)/2=11$ and the pair $(11,9)$. In the following round, $11\equiv3\pmod4$ and $9\equiv1\pmod4$, so the circuit subtracts and obtains $(9-11)/2=-1$, giving $(11,-1)$ and recording $e=1$. Continuing the fixed alternation reaches $(1,-1)$ with transcript $(0,0,1,0,1,0,0)$. Thus, one bit per round records whether the update used addition or subtraction, making the computation reversible without storing comparison or swap decisions.

\begin{algorithm}[t]
  \caption{Comparison-free ping-pong GCD}
  \label{alg:ping-pong-dialog-gcd}
  \begin{algorithmic}[1]
    \Statex \textbf{Input:} nonzero denominator $x\in\F_p$; fixed round budget $L$
    \Statex \textbf{Output:} terminal signed pair and transcript $\tau$
    \State $\rho_0\gets p$;\quad
           $\rho_1\gets x$ if $x$ is odd, otherwise
           $\rho_1\gets x-p$
    \State $\tau\gets()$
    \For{$i=0,\ldots,L-1$}
      \If{$i$ is even}
        \State $(s,t)\gets(\rho_0,\rho_1)$
          \Comment{update $\rho_1$}
      \Else
        \State $(s,t)\gets(\rho_1,\rho_0)$
          \Comment{update $\rho_0$}
      \EndIf
      \State $e_i\gets t[1]\oplus s[1]$
      \State $t\gets(t+(-1)^{e_i}s)/2$
      \State append $e_i$ to $\tau$
    \EndFor
    \Statex \textbf{Successful endpoint:}
      $(\rho_0,\rho_1)=(\eta_0,\eta_1)$ with
      $\eta_0,\eta_1\in\{-1,+1\}$
  \end{algorithmic}
\end{algorithm}

Ping-pong trades a longer transcript for cheaper value updates, simpler transcript decoding, and less expensive replay arithmetic. Here, \emph{comparison-free} refers specifically to the Euclidean value walk. The modular replay arithmetic still contains bounded carry and overflow tests.

For the canonical representative $1\leq x<p$, \Cref{alg:ping-pong-dialog-gcd} initializes the two operand registers as $(\rho_0,\rho_1)=(p,x^\star)$, where
\begin{equation}
  x^\star=
  \begin{cases}
    x, & x\text{ is odd},\\
    x-p, & x\text{ is even}.
  \end{cases}
  \label{eq:ping-pong-odd-lift}
\end{equation}
Both entries are odd and $x^\star\equiv x\pmod p$. The update schedule is fixed: in even-numbered rounds, $\rho_0$ is the source $s$ and $\rho_1$ is the target $t$, while in odd-numbered rounds their roles are reversed. The source and target therefore bounce between the two qubit registers from one round to the next, giving the method its \emph{ping-pong} name. For an odd signed integer $z$, let $z[1]$ denote bit one of its two's-complement representation, with bit zero the least significant bit. Thus, $z[1]=0$ when $z\equiv1\pmod4$ and $z[1]=1$ when $z\equiv3\pmod4$. In round $i$, the update rule is
\begin{equation}
  e_i=t[1]\oplus s[1],
  \qquad
  t'=\frac{t+(-1)^{e_i}s}{2}.
  \label{eq:ping-pong-forward}
\end{equation}

\begin{lemma}[Odd-preserving reversible value update]
\label{lem:ping-pong-value-update}
For odd signed integers $s,t$, \Cref{eq:ping-pong-forward} produces an odd integer $t'$ with $\gcd(s,t')=\gcd(s,t)$. Retaining $e$ makes the map $(s,t)\mapsto(s,t',e)$ injective. Its inverse on the image is $t=2t'-(-1)^e s$.
\end{lemma}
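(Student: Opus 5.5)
The plan is a direct modular-arithmetic argument in three parts: well-definedness and oddness of $t'$, preservation of the gcd, and the explicit inverse.

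\textbf{Oddness.} Since $s,t$ are odd, each is $\equiv 1$ or $3\pmod 4$. By the two's-complement convention fixed above, bit one satisfies $z[1]=0$ exactly when $z\equiv1\pmod4$. The case analysis is:
\begin{itemize}
\item If $e=0$, then $s\equiv t\pmod 4$. Hence $t+s\equiv 2s\equiv 2\pmod 4$, because $s$ is odd.
\item If $e=1$, then $t\equiv -s\pmod 4$, since the two odd residues $1$ and $3$ are negatives of each other mod $4$. Hence $t-s\equiv -2s\equiv 2\pmod 4$.
\end{itemize}
In both cases $t+(-1)^e s\equiv 2\pmod 4$. So the division by two is exact and $t'$ is an odd integer. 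I would also note that the sign convention on negative integers causes no trouble: the identity $z[1]=0\iff z\equiv1\pmod4$ holds for all odd signed $z$, because two's complement reduces correctly mod $4$.

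\textbf{Gcd preservation.} Let $g=\gcd(s,t)$. Then $g$ divides $2t'=t\pm s$. Because $s$ is odd, $g$ is odd, so $g$ divides $t'$ and hence $g\mid\gcd(s,t')$. Conversely, any common divisor $h$ of $s$ and $t'$ divides $2t'\mp s=t$. So $\gcd(s,t')\mid g$, and the two gcds are equal. The one step needing care is the use of oddness of $s$ to cancel the factor $2$.

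\textbf{Injectivity and inverse.} From the defining equation, $2t'=t+(-1)^e s$, so $t=2t'-(-1)^e s$. Hence $(s,t',e)$ determines $t$ uniquely, and the map $(s,t)\mapsto(s,t',e)$ is injective with the stated inverse on its image. Without $e$ this fails, since both $(t+s)/2$ and $(t-s)/2$ could in principle give the same $t'$ for different $t$; retaining $e$ removes that ambiguity.

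None of these steps is a real obstacle. The only point I would spell out carefully is the mod-$4$ check linking bit one to the residue class for negative integers.
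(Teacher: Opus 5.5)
Your proposal is correct and follows the paper's proof: the same mod-$4$ case split showing $t+(-1)^e s\equiv 2\pmod 4$, gcd invariance using oddness of $s$ to cancel the factor $2$, and the inverse obtained by solving $2t'=t+(-1)^e s$ for $t$. You spell out the two-sided divisor argument and the two's-complement check for negative operands, both of which the paper leaves implicit.
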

\begin{proof}
When the residues agree, their sum is $2\pmod4$. When they differ, their difference is $2\pmod4$. Therefore $t'$ is odd. Since $s$ is odd,
\begin{equation}
  \gcd(s,t)=\gcd(s,t\pm s)=\gcd\!\left(s,\frac{t\pm s}{2}\right)
  \label{eq:ping-pong-gcd-invariant}
\end{equation}
where the chosen sign is determined by $e$. Substitution gives the stated inverse, so distinct input pairs cannot have the same output triple.
\end{proof}

The target alternates between the two physical registers according to a fixed schedule, so selecting it requires no quantum comparison. For coprime inputs, the Euclidean walk eventually reaches a terminal pair $(\eta_0,\eta_1)$ with $\eta_0,\eta_1\in\{-1,1\}$. Subsequent rounds leave this pair unchanged. However, different components of a superposition may converge at different times, so the input-dependent convergence time is not available as a classical stopping condition. The implementation therefore uses a fixed budget of $704$ rounds and stores one transcript qubit per round. This is longer than the $609$-qubit Jump-2 transcript but avoids multi-bit symbol decoding.

For the linear map $D_x$ determined by the recorded path, replay on a payload $(0,y)$ gives signed copies of the quotient:
\begin{equation}
  D_x(0,y)=yx^{-1}(\eta_0,\eta_1)\pmod p.
  \label{eq:ping-pong-division}
\end{equation}
After correcting the terminal signs, both coefficient registers encode $yx^{-1}$. One copy can then be cleared by a reversible XOR, provided that the two registers have identical bitwise encodings. Equality only modulo $p$ is not sufficient for this cleanup. Running the same transcript in reverse implements multiplication:
\begin{equation}
  D_x^{-1}(\eta_0y,\eta_1y)=(0,yx)\pmod p.
  \label{eq:ping-pong-multiplication}
\end{equation}
At the level of one reverse replay cell, the target coefficient is updated as
\begin{equation}
  c_t=2c_t'-(-1)^e c_s\pmod p.
  \label{eq:ping-pong-reverse-payload}
\end{equation}
Reversing the complete Euclidean computation also restores the operand registers and clears the transcript. The first transition and the initial replay operation are specialized, while the remaining rounds use the fused cells described below. Here, \emph{comparison-free} refers to the Euclidean value walk, not to every modular-arithmetic or phase-correction operation.

\begin{proposition}[Conditional arithmetic correctness]
\label{prop:ping-pong-correctness}
Suppose the exact recurrence on $(p,x^\star)$ reaches a signed-unit pair within $L$ rounds. If the value operations, modular replay, and phase corrections are correct, with canonical coefficient encodings at cleanup, the recorded path implements division and multiplication as in \Cref{eq:ping-pong-division,eq:ping-pong-multiplication}, while restoring the denominator and clearing the transcript.
\end{proposition}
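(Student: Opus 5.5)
The plan is to argue at the level of linear algebra over $\F_p$, using \Cref{lem:ping-pong-value-update} for the value walk. I will view each recorded round $i$, with its bit $e_i$ and fixed source/target assignment, as an affine-free linear map $M_i$ on $\F_p^2$. This map fixes the source coordinate and sends the target to $(t+(-1)^{e_i}s)\cdot 2^{-1}$. Since $p$ is odd, $2$ is invertible, so each $M_i$ is invertible with inverse given by \Cref{eq:ping-pong-reverse-payload}. The recorded path then defines $D_x=M_{L-1}\cdots M_0$.

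The first step is to show that the same maps act on the integer operand pair. By hypothesis the exact recurrence on $(p,x^\star)$ reaches $(\eta_0,\eta_1)$ within $L$ rounds. Each later round keeps a signed-unit pair fixed (one checks $(\pm1\pm1)/2$ with the chosen sign lands back in $\{\pm1\}$, or uses the stated stationarity). Every integer step is exact division, so reducing modulo $p$ gives $D_x(p,x^\star)\equiv(\eta_0,\eta_1)$. Because $p\equiv0$ and $x^\star\equiv x$, linearity yields $D_x(0,x)=(\eta_0,\eta_1)$. Scaling by $yx^{-1}$ gives \Cref{eq:ping-pong-division}. Applying $D_x^{-1}$ to $(\eta_0 y,\eta_1 y)=yx^{-1}\cdot x(\eta_0,\eta_1)$ gives $(0,yx)$, which is \Cref{eq:ping-pong-multiplication}.

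The second step lifts this to the circuit. The hypotheses state that the value operations, modular replay cells, and phase corrections are correct. So the forward record phase realizes the injective maps of \Cref{lem:ping-pong-value-update}, storing $e_i$, and the replay cells realize $M_i$ or $M_i^{-1}$ on canonical representatives. After the division replay, I will multiply each coefficient by the terminal sign $\eta_j$, which is computable from the terminal operand registers. Both registers then hold the canonical encoding of $yx^{-1}$, so the XOR cleanup zeroes one of them exactly. Canonicity is the hypothesis needed here. For multiplication, I will first copy to produce $(\eta_0y,\eta_1y)$ and then reverse-replay. Finally, running the record phase backward inverts each value update using the retained $e_i$ via $t=2t'-(-1)^e s$. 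This restores $(p,x^\star)$ and hence $x$, where the odd lift is undone by the parity of $x$ and is itself reversible. The same pass uncomputes each $e_i$ and clears the transcript.

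I expect the main obstacle to be the bookkeeping rather than the algebra. The subtle points are the specialized first transition, including the odd lift, and checking that sign correction and XOR cleanup commute correctly with the transcript uncomputation. Where possible I will absorb the specialized first step into $D_x$ as one more invertible linear map, and rely on the stated correctness hypotheses for the rest.
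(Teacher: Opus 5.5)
Your proposal is correct and takes essentially the same route as the paper's proof. Both rest on the linearity over $\F_p$ of the transcript-conditioned composition and on the congruence $(p,x^\star)\equiv(0,x)\pmod p$, followed by scaling with $yx^{-1}$, inverting the composition for multiplication, using canonical encodings for the XOR cleanup, and reversing the value rounds to restore the denominator and clear the transcript; you simply make explicit steps the paper's terse proof leaves implicit, such as the invertibility of each round via $2^{-1}\bmod p$ and the fact that rounds after convergence fix a signed-unit pair.
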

\begin{proof}
Each recorded round is invertible and preserves the GCD. Conditioned on its transcript, the composition is linear over $\F_p$. Since $(p,x^\star)\equiv(0,x)\pmod p$, scaling by $yx^{-1}$ gives the division identity. Applying the inverse composition to $y(\eta_0,\eta_1)$ gives the multiplication identity. Canonical encodings justify the bitwise duplicate and zero-register cleanup. Reversing the value rounds restores their inputs and clears the recorded decisions. Correct phase corrections preserve these maps coherently.
\end{proof}

The inequality $|t'|\leq(|t|+|s|)/2$ alone does not establish convergence. \Cref{lem:ping-pong-termination} gives a conservative polynomial termination bound for the exact recurrence by combining average steps with magnitude-reducing difference steps. That bound is much larger than $704$ and does not justify the submitted shrinking-width schedule. The reference diagnostics find $33$ of $99{,}998$ nonidentity inputs with an intended denominator walk requiring more than $704$ rounds. Eight also violate the width schedule, and the observed maxima are $725$ and $723$ rounds for division and multiplication. The diagnostics are associated with $32$ of the $40$ nonidentity mixed-circuit failures, but do not trace the first divergent gate (see the \href{https://github.com/jieyilong/ecdsafail-circuit-evidence/blob/d6a93d1baaa2cc472278d2b419ef41d2d4708074/sources/frozen-tools/windowed_pingpong/EXPERIMENTS.md#exact-recurrence-diagnostics}{recorded recurrence diagnostics}). Proposition~\ref{prop:ping-pong-correctness} therefore remains conditional for the measured implementation.

\subsubsection{Fused Modular Replay}
\label{sec:fused-modular-replay}

Stage~5 of the point-addition schematic in \Cref{fig:best-qxt-circuit-pipeline} multiplies the slope $\lambda$ by $x_A-x_{R'}$ to obtain $y_{R'}+y_A$. Ping-pong implements this multiplication by recording the Euclidean path for the denominator $x_A-x_{R'}$ and then replaying that path in reverse on the coefficient registers. Writing the source coefficient as $y$, the target coefficient as $z$, and $\sigma=1-e$, each ordinary reverse-replay round performs
\begin{equation}
  z\gets2z+(-1)^\sigma y\pmod p.
\end{equation}
The same replay machinery is used in the forward direction in Stage~2 to compute $\lambda=d_y d_x^{-1}$, using the transcript recorded for $d_x$.

A direct implementation performs the reverse-replay update in two stages. It first doubles $z$ and applies a modular correction, then performs the signed addition and applies another correction.  Fused replay retains the overflow from doubling and the carry from addition, then uses them to perform one combined correction. It therefore removes one arithmetic pass, although the phase correction needed to clear a measured carry remains.

Let $B_0=2^n$ and $F=B_0-p$. For \texttt{secp256k1}, $F=2^{32}+977$. Suppose the $n$-bit low word produced by the doubling and signed addition is $w$. The discarded multiples of $B_0$ are summarized by $\kappa\in\{-1,0,1,2\}$, determined from the overflow of $2z$ and the carry or borrow of the signed addition. Since $B_0=p+F$,
\begin{equation}
  2z+(-1)^\sigma y
  =w+\kappa B_0
  \equiv w+\kappa F\pmod p.
  \label{eq:fused-correction}
\end{equation}
An unfused cell applies a correction after doubling and another after adding or subtracting $y$. The fused cell combines these corrections. The equation above uses the ordinary low-word representation; the emitted subtraction branch instead complements the target and reverses the correction sign before restoring that representation. Because $F$ is sparse, the selected constant can be supplied bit by bit without allocating a full correction register. The retained doubling overflow is erased using the corrected target parity, source parity, sign, and addition carry. The addition carry is measured and phase-corrected. Forward replay similarly combines signed addition and halving.

The submitted implementation applies the correction only within a $56$-bit window. It is correct when no carry or borrow escapes this window and the carry-erasure comparisons recover their predicates correctly. The full-width algebra is exact, but these truncated folds and comparisons remain approximation sources.

At the matched $56$-bit fold width, fusion reduces one inverse-replay cell from $446$ to $393$ static Toffolis, an $11.9\%$ reduction, while increasing local workspace from $87$ to $88$ qubits. The source-default unfused cell uses $512$ Toffolis and $89$ workspace qubits, but this comparison also includes a wider correction range. The historical $83{,}484$-Toffoli mean ablation contains additional implementation differences and should not be attributed to fusion alone. Detailed derivations and source-level accounting appear in \Cref{app:replay-cell-accounting}.

\subsubsection{Targeted Zero-Payload Repair}
\label{sec:zero-payload-repair}

The original replay can represent field zero by the word $p$ and leave input-dependent phases on zero-slope inputs. We therefore test a targeted variant that records $h=[c\neq0]$ for the input payload $c$. Forward replay uses the masked sign $he_i$, reverse replay uses $h(1-e_i)$, and terminal sign corrections are also masked by $h$. On the zero branch, both coefficient registers stay at canonical zero. On the nonzero branch, the sign choices are unchanged. The flag is uncomputed from the output zero predicate. This cleanup is valid when the implemented input and output words agree on whether they are zero, as correct canonical multiplication or division by a nonzero field element guarantees. It remains conditional when other arithmetic fails. A blocked zero test limits temporary storage, and the persistent flag costs one qubit.

The variant also uses the exact chunk-carry predicate $[z<a]\lor(c_{\mathrm{in}}\land[z=a])$, with the incoming carry kept live until its phase correction is complete, and canonical subtraction for the two input differences and final $y$ update. It retains the remaining guarded replay arithmetic and uses $768$ value rounds. These changes remove the tested zero-payload and zero-slope defects without replacing every replay cell by the canonical reference. They do not repair all arithmetic boundaries or justify the shrinking value widths. In particular, explicit field denominators exceed $768$ exact rounds (\Cref{app:round-counterexamples}). The integrated measurements and remaining failures are reported in \Cref{sec:targeted-repair-evaluation}.

\subsection{Specialized Modular Arithmetic}
\label{sec:specialized-squaring}
\label{sec:consprop}
\label{sec:dead-code-elimination}

The Jump-2 point-addition circuit specializes its square-subtract operation to $p=2^{256}-F$, where $F=2^{32}+977$. Let $\beta=2^{128}$ and split the slope as $\lambda=\lambda_{\mathrm{lo}}+\beta\lambda_{\mathrm{hi}}$. Define $a_2=\lambda_{\mathrm{lo}}^2$, $b_2=\lambda_{\mathrm{hi}}^2$, and $c_2=(\lambda_{\mathrm{lo}}+\lambda_{\mathrm{hi}})^2$. The standard Karatsuba identity~\cite{parent2017karatsuba} gives
\begin{equation}
  \begin{aligned}
    \lambda^2&=a_2+\beta(c_2-a_2-b_2)+\beta^2b_2\\
    &\equiv a_2+\beta(c_2-a_2-b_2)+Fb_2\pmod p.
  \end{aligned}
  \label{eq:secp-karatsuba-square}
\end{equation}
Instead of retaining a full $512$-bit product, the circuit computes each of $c_2,a_2,b_2$ in turn, accumulates its contribution into $X$, and uncomputes it before constructing the next product. The respective updates are $-\beta c_2$, $-a_2+\beta a_2$, and $\beta b_2-Fb_2$. Their sum is $-\lambda^2$ modulo $p$.

Each smaller square forms an off-diagonal bit product only once and doubles its contribution. Multiplication by $F$ uses $F=2^{32}+2^{10}-2^5-2^4+1$. The algebra is exact, while truncated corrections require separate justification.

Ping-pong uses a different product-register squaring implementation, retaining product workspace during its modular reduction. The square is therefore part of the full-circuit difference from Jump-2, not a controlled ablation of the Euclidean recurrence alone. Its coherent identity control is described in \Cref{sec:windowed-addition-compatible-circuit}.

The coordinate updates also load classical inputs only when needed and uncompute carries by measurement~\cite{gidney18}. Quantum-selected addends must retain coherent conditional arithmetic (\Cref{sec:windowed-addition-compatible-circuit}).


\section{Quantum-Addressed Windowed Point Addition}
\label{sec:windowed-addition-compatible-circuit}

As reviewed in \Cref{sec:windowed-point-addition-interfaces}, windowed Shor uses a quantum window register to select an addend coherently from a classically precomputed table. The {\challengename} benchmark instead supplies the addend $A$ classically and therefore omits this lookup/use/unlookup interface. To bridge this gap, we adapt both Jump-2 and ping-pong to quantum-addressed classical tables~\cite{google26,s26}. We also replace reverse lookup replay with measurement-based cleanup while holding the underlying arithmetic settings fixed. The resulting circuits implement the single-call point-addition interface required by windowed Shor~\cite{google26,s26}.

\subsection{Interface and Temporary Storage}

Let $\mathcal T$ be a classically precomputed table with $2^w$ entries and $J$ its $w$-qubit address register. The window-selected circuit implements
\begin{equation}
  U_{\mathcal T}:\ket{j}_J\ket{R}_{XY}\ket{0}_W
  \longmapsto
  \ket{j}_J\ket{R+\mathcal T[j]}_{XY}\ket{0}_W.
  \label{eq:windowed-kernel-map}
\end{equation}
The specification applies on the supported affine domain and includes the identity entry. The table consists of $2\cdot2^w$ classical $n$-bit coordinate words. These values control the table-dependent gates, while $J$ selects a row coherently. The topology is independent of the table contents, so shifted tables of multiples of $G$ or $P$ can be substituted without recompilation.

The selected coordinates are loaded only when forming the input differences, preparing $X$, and recovering the output coordinates in \Cref{fig:best-qxt-circuit-pipeline}. They are coherently unloaded before the peak-width GCD and squaring stages. Stage~3 loads $x_A$, adds $x_A$ and $2x_A$ to $X$, reverses the temporary doubling, and unloads $x_A$, avoiding a separate table column for $3x_A$.

Row zero stores $(0,0)$ as the encoding of $\mathcal O$. A coherent flag $b=[j\neq0]$ disables the square-subtract operation and final modular negation when this row is selected. The division and multiplication stages still run, but cancel under the stated arithmetic assumptions. For ping-pong's product-register square, the wrapper computes $M=b\lambda$, squares it, and uncomputes it. This adds $512$ CCX gates, while capped carry workspace prevents the mask from increasing the arithmetic peak or narrowing another arithmetic range.

Although the selected point occupies two temporary $256$-bit coordinate registers, these payloads are unloaded before the arithmetic peak and therefore do not add $512$ qubits to it. At $w=16$, the persistent window overhead consists of the $16$ address qubits and the flag $b$. At fixed arithmetic settings, these controls increase original ping-pong from $1{,}321$ to $1{,}338$ qubits and the later targeted repair from $1{,}402$ to $1{,}419$, both by $17$. The repair's larger starting width comes from its longer GCD schedule and additional arithmetic state, not the lookup interface. Jump-2 has the same $17$ controls but reaches its windowed peak with six fewer non-window qubits live, giving a net increase from $1{,}151$ to $1{,}162$.



\subsection{Exact Split-Address Cleanup}
\label{sec:split-address-cleanup}

A QROM lookup temporarily entangles the address register with the selected table word:
\begin{equation}
  \sum_j\alpha_j\ket{j}\ket{0}
  \longmapsto
  \sum_j\alpha_j\ket{j}\ket{c_j}.
\end{equation}
After the point-addition stage has used and restored $c_j$, this temporary payload must be erased without revealing the address or disturbing its superposition. Reversing the complete lookup accomplishes this but requires another expensive QROM traversal. We instead use measurement-based uncomputation~\cite{gidney19windowed}.

Let $m\in\{0,1\}^d$ be the result of measuring the restored $d$-bit payload in the $X$ basis. The measurement releases the payload register and leaves only an address-dependent phase:
\begin{equation}
  \sum_j\alpha_j\ket{j}\ket{c_j}
  \longmapsto
  2^{-d/2}\sum_j(-1)^{m\cdot c_j}\alpha_j\ket{j}\ket{0},
  \label{eq:payload-measurement}
\end{equation}
where the inner product is modulo two. The measurement outcomes are uniformly distributed and do not reveal $j$. Since both $m$ and the table entries $c_j$ are classical, the phase $(-1)^{m\cdot c_j}$ is known and can be canceled exactly. Measurement therefore introduces no additional approximation when the phase correction is applied correctly.

To apply this correction efficiently, split the $w$-bit address register into low and high halves and decode each half into a one-hot register. For every table row whose phase must be flipped, a CZ gate connects the corresponding low and high flags. Only the pair associated with the selected address is active, so these gates apply exactly the required phase. The two one-hot decoders are then reversed and cleared.

\begin{figure}[tbp]
\centering
\begin{tikzpicture}[x=1cm,y=1cm,
  block/.style={draw,line width=0.5pt,text width=2.25cm,minimum height=1.65cm,align=center,inner sep=3pt,font=\small},
  flow/.style={->,line width=0.6pt}]
  \node[block] (m) at (0,0) {Measure payload\\$X$ basis, outcome $m$};
  \node[block] (d) at (2.9,0) {Decode address\\Low and high halves};
  \node[block] (z) at (5.8,0) {Correct phase\\$(-1)^{m\cdot c_j}$};
  \node[block] (u) at (8.7,0) {Clear decoders\\Unitary inverse};
  \draw[flow] (m.east) -- (d.west);
  \draw[flow] (d.east) -- (z.west);
  \draw[flow] (z.east) -- (u.west);
  \draw[flow,dashed] (m.north) -- (0,1.2) -- (5.8,1.2) -- (z.north);
  \node[font=\footnotesize,fill=white,inner sep=2pt] at (2.9,1.2) {Classical measurement record};
  \node[font=\footnotesize] at (4.35,-1.2) {The quantum address is preserved. Released payload space hosts the decoders.};
\end{tikzpicture}
\caption{Conceptual exact-cleanup sequence, not the full point-addition wiring. The payload must first be restored to its table value. Corrected measurement branches have input-independent magnitude, and the decoder workspace returns to zero.}
\label{fig:exact-qrom-cleanup}
\end{figure}

The two coordinate columns share the same decoders. Stage~3 uses a specialized $K=2$ lookup for the single coordinate $x_A$. Its controlled swaps are reversed before the restored payload banks are measured, after which the remaining phase depends only on the upper $w-1$ address bits.

A one-hot decoder for $k$ address bits uses $2^k-1$ Toffolis, and reversing it has the same cost. The complete split-address correction therefore costs
\begin{equation}
  C(w)=2\left(2^{\lfloor w/2\rfloor}+2^{\lceil w/2\rceil}\right)-4
  \label{eq:split-cleanup-cost}
\end{equation}
Toffolis. At $w=16$, each unload costs $C(16)=1{,}020$ Toffolis. The specialized $K=2$ unload also costs $1{,}020$, including its reversed swaps. Across the three unloads, the cost falls from $164{,}090$ to $3{,}060$ Toffolis, saving $161{,}030$.

The decoder registers are allocated only after the measured payload has been released. Their largest local allocation is approximately $1{,}041$ qubits, below the arithmetic peaks of both Jump-2 and ping-pong. Split-address cleanup therefore reduces the Toffoli count without increasing peak width $Q$.

\subsection{Conditional Coherence}
\label{sec:lookup-coherence}

Measurement-based cleanup must remove both temporary values and the phases introduced by measuring them. For example, measuring a temporary AND $c=ab$ in the $X$ basis contributes the phase $(-1)^{mab}$ when the outcome is $m$. Applying $\mathrm{CZ}_{a,b}^{m}$ cancels this phase~\cite{gidney18}. The required controls must remain available until the correction is complete.

\begin{proposition}[Conditional coherence of the lookup interface]
\label{prop:lookup-coherence}
Let $\mathcal D$ be the set of valid address--accumulator pairs $(j,R)$, and define the corresponding input subspace
\begin{equation}
  \mathcal H_{\mathrm{valid}}
  =
  \operatorname{span}\left\{
    \ket{j}_J\ket{R}_{XY}\ket{0}_W:
    (j,R)\in\mathcal D
  \right\}.
\end{equation}
Assume that the point-addition arithmetic acts coherently and with the correct relative phases on this subspace, restores each lookup payload to its table value before cleanup, and returns all other workspace qubits to $\ket{0}$. In particular, its corrected measurement branches must differ from the intended arithmetic map only by input-independent scalar factors. If every $X$-basis measurement of a payload or routing ancilla is followed by the required classically controlled phase correction, then the quantum channel implemented by the wrapper agrees with
\begin{equation}
  \rho\longmapsto U_{\mathcal T}\rho U_{\mathcal T}^{\dagger}
\end{equation}
for every state $\rho$ supported on $\mathcal H_{\mathrm{valid}}$.
\end{proposition}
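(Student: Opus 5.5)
The plan is to model the wrapper as a sequence of isometries interleaved with $X$-basis measurements and classically controlled diagonal corrections, and to show that every measurement branch, once corrected, acts on $\mathcal H_{\mathrm{valid}}$ as $c_{\vec m}U_{\mathcal T}$. Here $\vec m$ is the full record of outcomes, and $c_{\vec m}$ is a scalar with $\sum_{\vec m}|c_{\vec m}|^2=1$ that does not depend on $(j,R)$. Once this is established, the channel is $\rho\mapsto\sum_{\vec m}|c_{\vec m}|^2U_{\mathcal T}\rho U_{\mathcal T}^\dagger=U_{\mathcal T}\rho U_{\mathcal T}^\dagger$ for every $\rho$ supported on $\mathcal H_{\mathrm{valid}}$. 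By linearity, it suffices to verify the branch identity on the basis states $\ket{j}\ket{R}\ket{0}$ with $(j,R)\in\mathcal D$, checking both the output vector and the relative phase.

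\textbf{Step 1: lookup and arithmetic.} The QROM lookup maps $\ket{j}\ket{R}\ket{0}\mapsto\ket{j}\ket{R}\ket{c_j}\ket{0}$. By hypothesis, the point-addition arithmetic, including its internal measurement-based uncomputations, then takes this to $\gamma_{\vec m_{\mathrm{arith}}}\ket{j}\ket{R+\mathcal T[j]}\ket{c_j}\ket{0}$. The factor $\gamma$ is input independent, and no $(j,R)$-dependent phase survives. This is exactly the assumption in the statement, so I would invoke it rather than re-derive it. It also covers the identity row $j=0$ through the flag $b$.

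\textbf{Step 2: payload measurements.} The payload register is a tensor factor in state $\ket{c_j}$, correlated only with $\ket{j}$. Measuring it in the $X$ basis with outcome $m$ gives $2^{-d/2}(-1)^{m\cdot c_j}$ times the state with the payload reset, as in \Cref{eq:payload-measurement}. Each outcome occurs with probability $2^{-d}$, independently of the input. For the specialized $K=2$ Stage~3 unload, I would first undo its controlled swaps so that the payload is again a function of the address alone.

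\textbf{Step 3: phase correction.} The split-address correction applies $\mathrm{CZ}$ between the low and high one-hot flags for exactly those rows $j$ with $m\cdot c_j=1$. On $\ket{j}$, only the pair of flags belonging to $j$ is active, so the correction multiplies the state by $(-1)^{m\cdot c_j}$ and cancels the measured phase. Reversing the decoders unitarily then returns their workspace to $\ket 0$. Any routing ancilla, for example a temporary AND, is handled the same way using the $\mathrm{CZ}^m$ rule: its controls are still live when the correction is applied, so the correction cancels $(-1)^{mab}$.

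\textbf{Step 4: composition.} Composing all branches gives $c_{\vec m}=\gamma\,2^{-D/2}$ with unit-modulus factors, where $D$ is the total number of measured qubits. The sum of $|c_{\vec m}|^2$ over all outcome records equals one, which yields the channel identity.

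\textbf{Main obstacle.} The delicate point is showing that each correction depends only on the classical record and on registers still present in the computational basis at the moment it is applied. Otherwise the cancellation could leave a residual phase that depends on $R$. I would handle this by fixing the temporal ordering: payload restoration, then measurement, then decoder allocation in the released space, then CZ, then decoder uncompute. At the time of the correction, the only data on which the phase depends is the address $j$, which is never measured.
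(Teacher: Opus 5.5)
Your proposal is correct and follows essentially the same route as the paper: establish the branch identity $K_{\mathbf m}\ket{\psi}=\gamma_{\mathbf m}U_{\mathcal T}\ket{\psi}$ on $\mathcal H_{\mathrm{valid}}$ with input-independent scalars, then sum over measurement records to recover $U_{\mathcal T}\rho U_{\mathcal T}^{\dagger}$. You justify that identity stage by stage through the payload measurement phase, the split-address CZ correction, and the routing-ancilla $\mathrm{CZ}^m$ rule, which the paper's proof leaves to the surrounding text. One small tidy-up: in Step~4 the normalization $\sum_{\vec m}|c_{\vec m}|^2=1$ follows directly from trace preservation and the isometry property of $U_{\mathcal T}$. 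You therefore need not write $c_{\vec m}=\gamma\,2^{-D/2}$, which would double-count if $D$ includes measurements already absorbed into $\gamma$.
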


\begin{proof}
Let $K_{\mathbf m}$ be the Kraus operator associated with the complete measurement record $\mathbf m$. Before correction, the $X$-basis measurements contribute known phases that may depend on the address and payload values. The payload correction cancels the phase in \Cref{eq:payload-measurement}, and the routing corrections cancel the phases introduced by measured ancillas. The remaining branch amplitude depends only on $\mathbf m$, not on the input state. Therefore, for every $\ket{\psi}\in\mathcal H_{\mathrm{valid}}$,
\begin{equation}
  K_{\mathbf m}\ket{\psi}
  =
  \gamma_{\mathbf m}U_{\mathcal T}\ket{\psi},
  \qquad
  \sum_{\mathbf m}|\gamma_{\mathbf m}|^2=1,
  \label{eq:lookup-branch-condition}
\end{equation}
where $\gamma_{\mathbf m}$ is independent of $\ket{\psi}$. Consequently, the probability $|\gamma_{\mathbf m}|^2$ of observing $\mathbf m$ is independent of the input. Discarding the classical measurement record gives
\begin{equation}
  \sum_{\mathbf m}
  K_{\mathbf m}\rho K_{\mathbf m}^{\dagger}
  =
  U_{\mathcal T}\rho U_{\mathcal T}^{\dagger},
\end{equation}
which proves the claim.
\end{proof}

The proposition assumes phase-correct arithmetic as well as correct basis outputs. A diagonal sign error can preserve every output word and still destroy interference. It does not address finite round budgets, truncated arithmetic, or exceptional inputs. The following bound states what additional evidence would suffice for composition.

\subsection{State-Dependent Error and Composition}
\label{sec:state-dependent-composition}

Let $\Pi_i$ project onto an input subspace on which every corrected branch of call $i$ agrees with its ideal isometry $U_i$, up to an input-independent scalar as in \Cref{eq:lookup-branch-condition}. Let $\rho_{i-1}^{\mathrm{ideal}}$ be the ideal state immediately before that call, including its correlations with the remaining registers, and set $q_i=\operatorname{Tr}[(I-\Pi_i)\rho_{i-1}^{\mathrm{ideal}}]$. The bad-subspace argument in \Cref{app:coherent-error} gives
\begin{equation}
  \frac12\left\|\rho_{\mathrm{out}}-\rho_{\mathrm{out}}^{\mathrm{ideal}}\right\|_1
  \leq\min\left\{1,2\sum_i\sqrt{q_i}\right\}.
  \label{eq:state-dependent-composition}
\end{equation}
No independence between calls is required. The weights must, however, refer to the actual ideal intermediate states and a certified good subspace. Random affine-input failure rates cannot simply replace them. For example, immediately after direct initialization from a $16$-bit low window, the accumulator has only $2^{16}$ possible values. It is not uniformly distributed over the curve. Thus the bound supplies a conditional route to a full-algorithm guarantee, while identifying the arithmetic and distributional obligations that remain.

\section{Correctness and Resource Evaluation}
\label{sec:results}
\label{sec:results-summary}

The preceding sections introduced the Jump-2 and ping-pong backends and their adaptation to quantum-addressed window selection. We now evaluate the resulting circuits in terms of peak logical width $Q$, mean executed Toffoli count $T$, and observed correctness. We first select conservative ping-pong parameters using $16{,}384$ mixed-addition inputs reserved for parameter tuning. After freezing the implementations, we compare windowed Jump-2, original ping-pong, and conservative ping-pong on the same $100{,}000$ fresh inputs across nine lookup-table configurations. Controlled ablations and an earlier matched corpus isolate resource effects not addressed by this fresh-input comparison. Finally, we compare the measured operating points with the published Google and Schrottenloher estimates under their respective interfaces, correctness evidence, and accounting conventions. All reported mean executed Toffoli counts include failed inputs.

\subsection{Evaluation Design}
\label{sec:headline-operating-points}

All three circuits in the primary study use the same $16$-bit quantum-addressed interface and exact split-address cleanup. Jump-2 uses its $261$-step schedule and $609$-qubit transcript. Original ping-pong uses $704$ rounds and the submitted arithmetic ranges. Conservative ping-pong uses $736$ rounds together with wider value registers, replay guards, and coordinate comparisons selected on a separate parameter-tuning test set. The term \emph{conservative} describes these wider parameters, not an exact or all-input-correct implementation.

The original three-circuit frozen study supplies the primary paired comparison. The targeted repair is evaluated separately in \Cref{sec:targeted-repair-evaluation}. Controlled backend, square, cleanup, and allocation studies answer narrower resource questions. The earlier shared corpus had already informed development and is treated as exploratory. Results from these protocols are not pooled. Tables use M for $10^6$ and B for $10^9$.

\FloatBarrier
\subsection{Parameter Selection and Validation Study}
\label{sec:bounded-validation}

As discussed in \Cref{sec:ping-pong-dialog-gcd,lem:ping-pong-termination}, the ideal ping-pong recurrence terminates, but the theoretical bound does not justify the implemented $704$-round budget or shrinking register widths. We therefore test whether additional rounds and wider arithmetic reduce observed failures, selecting a more conservative ping-pong configuration for the comparative evaluation. These changes do not establish all-input correctness. Jump-2 remains unchanged as a comparison point rather than undergoing a corresponding parameter search.

For parameter selection only, we compare five cumulative ping-pong configurations on a separate set of $16{,}384$ shared mixed-addition inputs (\Cref{tab:bounded-development}). These inputs are not part of the subsequent $100{,}000$-input evaluation set. Starting from \emph{Original}, \emph{More rounds} increases the budget to $736$. \emph{Wider values} adds $16$ bits to the value-width allowance, retaining the $259$-bit cap. \emph{Wider replay guards} enlarges the replay comparison and correction ranges. Finally, \emph{Wider coordinate check} widens the coordinate carry comparison from $19$ to $40$ bits, removing a traced phase error shared by the first four configurations. 

We call this final configuration (i.e. wider coordinate check) \emph{conservative ping-pong}. Its windowed version appears in \Cref{tab:fresh-operating-points}. It was selected before generating the independent evaluation set. Here, \emph{conservative} means wider parameters, not guaranteed correctness.

\begin{table}[tbp]
\centering\small
\setlength{\tabcolsep}{4pt}
\begin{tabular}{@{}lrrrrrrr@{}}
\toprule
Ping-pong configuration & Rounds & $Q$ & $T$ (M) & Cls & Pha & Anc & Any \\
\midrule
Original & $704$ & $1{,}321$ & $0.952714$ & $8$ & $5$ & $0$ & $10$ \\
More rounds & $736$ & $1{,}353$ & $0.975938$ & $3$ & $2$ & $0$ & $4$ \\
Wider values & $736$ & $1{,}359$ & $1.019115$ & $0$ & $1$ & $0$ & $1$ \\
Wider replay guards & $736$ & $1{,}375$ & $1.077979$ & $0$ & $1$ & $0$ & $1$ \\
Wider coordinate check & $736$ & $1{,}375$ & $1.078051$ & $0$ & $0$ & $0$ & $0$ \\
\bottomrule
\end{tabular}
\vspace{8pt}
\caption{Ping-pong parameter tuning on $16{,}384$ shared mixed-addition inputs. Changes are cumulative. Cls, Pha, and Anc count output, phase, and ancilla failures. Any counts inputs with at least one failure. These tuning results are not independent failure-rate estimates.}
\label{tab:bounded-development}
\end{table}

We then freeze all three windowed circuits, the input driver, and the analysis scripts. Unlike the mixed-addition tuning tests, this comparison uses the $16$-bit quantum-addressed interface and exact split-address cleanup. Each circuit receives the same $100{,}000$ fresh inputs, divided nearly equally among \textit{nine lookup-table configurations}. These nine configurations combine three base points $B\in\{G,P_1,P_2\}$ with three bit-position shifts $i\in\{0,128,240\}$, defining the precomputed point tables $\mathcal T_B^{(i)}[j]=[j2^i]B$. A fresh seed defines the public test points $P_1=[k_1]G$ and $P_2=[k_2]G$. Each trial samples a uniform address and $R=[a]G$ from a pseudorandom $256$-bit scalar. Identity accumulators and nonidentity addends with $R=\pm A$ are rejected. Identity addends are retained. No $R=-2A$ case occurs, although it is not separately filtered.

Conservative ping-pong has no detected failures, compared with $39$ for original ping-pong and $215$ for Jump-2. Its improvement over original ping-pong costs $54$ additional qubits and $11.12\%$ more mean Toffolis. These results compare selected implementations, not equal-error optima. The reported confidence bounds concern the sampled input distribution, not coherent channel error, and structured failures remain (\Cref{sec:boundary-mechanisms}). Complete parameter settings, sampling rules, confidence calculations, and evaluator checks are documented in the \href{https://github.com/jieyilong/ecdsafail-circuit-evidence/blob/d6a93d1baaa2cc472278d2b419ef41d2d4708074/sources/frozen-tools/bounded_qip/PROTOCOL.md}{evaluation protocol} and \Cref{app:artifacts}.

\begin{table}[tbp]
\centering\small
\setlength{\tabcolsep}{4pt}
\begin{tabular}{@{}lrrrrr@{}}
\toprule
Windowed circuit & $Q$ & $T$ (M) & $QT$ (B) & Failures & Upper (\%) \\
\midrule
Ping-pong, original & $1{,}338$ & $1.128$ & $1.5086$ & $39$ & $0.11527$ \\
Ping-pong, conservative & $1{,}392$ & $1.253$ & $1.7440$ & $0$ & $0.00300$ \\
Jump-2 & $1{,}162$ & $1.523$ & $1.7699$ & $215$ & $0.35345$ \\
\bottomrule
\end{tabular}
\vspace{8pt}
\caption{Fresh frozen study with $100{,}000$ inputs per circuit across nine lookup-table configurations. \textit{Ping-pong, conservative} refers to the \textit{wider coordinate check} configuration in~\Cref{tab:bounded-development}. All variants use exact split-address cleanup. Failures are the union of output, phase, and ancilla checks. Upper is a one-sided $95\%$ bound on the allocation-weighted failure rate, not a coherent-error bound. Means include failed cases.}
\label{tab:fresh-operating-points}
\end{table}

\FloatBarrier

\subsection{Ablation Studies of Resource Savings}
\label{sec:mechanism-effects}

We now examine where the resource savings come from. Ping-pong's lower gate count largely reflects cheaper Euclidean value updates and replay, while measurement-based lookup cleanup provides a separate saving for both backends. The remainder of this section provides further details. 

\subsubsection{Lookup cleanup}
\label{sec:cleanup-ablation}
Keeping the arithmetic and table interface fixed, replacing reverse lookup with split-address measurement-based cleanup reduces mean $T$ by $9.56\%$ for Jump-2 and $12.50\%$ for ping-pong without increasing $Q$ (\Cref{tab:exact-cleanup-effects}). Both save approximately $161{,}000$ static Toffolis. This comparison uses an earlier shared test set that informed interface development. Its results are exploratory and are not pooled with the primary validation study.

\begin{table}[tbp]
\centering\footnotesize
\setlength{\tabcolsep}{3.5pt}
\renewcommand{\arraystretch}{1.12}
\begin{tabular}{@{}lrrrr@{}}
\toprule
Circuit & $Q$ & Static Toffolis & Mean $T$ (M) & Counted Clifford (M) \\
\midrule
Jump-2 replay & $1{,}162$ & $1{,}754{,}382$ & $1.684161$ & $90.754$ \\
Jump-2 split & $1{,}162$ & $1{,}593{,}346$ & $1.523120$ & $69.044$ \\
Ping-pong replay & $1{,}338$ & $1{,}345{,}161$ & $1.288560$ & $96.481$ \\
Ping-pong split & $1{,}338$ & $1{,}184{,}131$ & $1.127527$ & $74.772$ \\
\bottomrule
\end{tabular}
\vspace{8pt}
\caption{Exact-cleanup effects at unchanged peak width. Clifford counts follow the evaluator convention, including CX, CZ, SWAP, HMR, and reset but excluding tracked X and Z operations. Static and executed counts are distinct.}
\label{tab:exact-cleanup-effects}
\end{table}

\subsubsection{GCD backend and squaring}
\label{sec:controlled-backend-ablation}

To separate backend savings from squaring effects, we test both backends with both squaring methods, keeping the surrounding point-addition operations fixed and disabling subsequent circuit-optimization passes (\Cref{tab:mechanism-ablation}). With Karatsuba squaring unchanged, replacing Jump-2 by ping-pong saves $340{,}442$ static Toffolis at a cost of $171$ additional qubits. Changing the square alone saves approximately $8{,}163$ Toffolis. The backend change therefore contributes substantially more to the gate reduction.

\begin{table}[tbp]
\centering\footnotesize
\setlength{\tabcolsep}{6pt}
\begin{tabular}{@{}llrr@{}}
\toprule
Backend & Square & $Q$ & Static Toffolis \\
\midrule
Legacy Jump-2 & Karatsuba & $1{,}150$ & $1{,}357{,}682$ \\
Legacy Jump-2 & Product register & $1{,}287$ & $1{,}349{,}527$ \\
Ping-pong & Karatsuba & $1{,}321$ & $1{,}017{,}240$ \\
Ping-pong & Product register & $1{,}321$ & $1{,}009{,}077$ \\
\bottomrule
\end{tabular}
\vspace{8pt}
\caption{Resource effects of changing the GCD backend and squaring implementation. All configurations share the same surrounding point-addition operations and use explicit parameter settings with post-emission optimization disabled. These are controlled resource comparisons, not equal-error comparisons. Diagnostic outcomes are retained in the evidence repository.}
\label{tab:mechanism-ablation}
\end{table}


\subsubsection{Gate-count breakdown and peak qubit width}
\label{sec:mechanism-accounting}

\Cref{tab:mechanism-static} breaks down the static Toffoli count of each complete windowed point-addition circuit by computational stage. \emph{Original} and \emph{Conservative} denote the windowed versions of the first and final ping-pong configurations in \Cref{tab:bounded-development}, respectively.

For example, the GCD value-update row counts $633{,}994$ Toffoli gates in Jump-2 and $395{,}368$ in original ping-pong. Each entry includes computing and reversing the Euclidean values across both the division and multiplication stages, not a single GCD round or call. Field-valued coefficient operations are counted separately in the replay rows.

The Share column compares Jump-2 with \emph{original}, not conservative, ping-pong. Their total pre-optimization difference is $410{,}567$ gates. GCD value updates save $633{,}994-395{,}368=238{,}626$ gates, contributing $58.1\%$ of that total saving. Ordinary field replay contributes another $37.3\%$. Negative shares indicate stages where original ping-pong costs more. The final three rows show the pre-optimization total, optimization adjustment, and final static count.

\begin{table}[tbp]
\centering\footnotesize
\setlength{\tabcolsep}{3pt}
\begin{tabular}{@{}lrrrr@{}}
\toprule
Computational stage & Jump-2 & Original & Conservative & Share \\
 & & (ping-pong) & (ping-pong) & (\%) \\
\midrule
GCD value updates & $633{,}994$ & $395{,}368$ & $439{,}196$ & $58.1$ \\
Replay setup and endpoints & $22{,}526$ & $828$ & $956$ & $5.3$ \\
Ordinary field replay & $704{,}939$ & $551{,}772$ & $670{,}876$ & $37.3$ \\
Transcript encoding/decoding & $6{,}378$ & $0$ & $0$ & $1.6$ \\
Squaring and slope masking & $55{,}668$ & $64{,}968$ & $64{,}968$ & $-2.3$ \\
Coordinate updates & $3{,}985$ & $3{,}987$ & $4{,}092$ & $0.0$ \\
Lookup and address control & $167{,}208$ & $167{,}208$ & $167{,}208$ & $0.0$ \\
\midrule
Total before optimization & $1{,}594{,}698$ & $1{,}184{,}131$ & $1{,}347{,}296$ & $100.0$ \\
Optimization adjustment & $-1{,}352$ & $0$ & $0$ & --- \\
Final static count & $1{,}593{,}346$ & $1{,}184{,}131$ & $1{,}347{,}296$ & --- \\
\bottomrule
\end{tabular}
\vspace{8pt}
\caption{Static CCX+CCZ counts for the three windowed circuits, including local phase corrections. Share is each stage's Jump-2 count minus its original ping-pong count, divided by the total pre-optimization difference of $410{,}567$, expressed as a percentage. Negative shares indicate additional cost in original ping-pong. Shares are rounded to one decimal place. Subsequent optimization is shown separately. These are static counts, not mean executed counts $T$.}
\label{tab:mechanism-static}
\end{table}



Peak qubit width instead counts registers that are live simultaneously. Ping-pong's replay peak occurs during initialization or seed processing (\Cref{tab:mechanism-peak}). Original windowed ping-pong also reaches $1{,}338$ qubits during squaring, while conservative ping-pong reaches $1{,}392$ during replay. These peaks are not added together. Jump-2 peaks during an inverse modular correction, so transcript length alone does not determine $Q$.

\begin{table}[tbp]
\centering\footnotesize
\setlength{\tabcolsep}{4pt}
\begin{tabular}{@{}lrrrrrr@{}}
\toprule
Ping-pong circuit & Payload & Tape & Values & Carries & Controls & $Q$ \\
\midrule
Mixed original & $512$ & $704$ & $16$ & $88$ & $1$ & $1{,}321$ \\
Windowed original & $512$ & $704$ & $16$ & $88$ & $18$ & $1{,}338$ \\
Windowed conservative & $512$ & $736$ & $22$ & $104$ & $18$ & $1{,}392$ \\
\bottomrule
\end{tabular}
\vspace{8pt}
\caption{Simultaneous allocation at the special replay peak. Controls include one arithmetic flag and, in windowed circuits, the 16 address qubits and nonzero-address flag. Counts describe allocator ownership, not a proof that every released wire is semantically clean. No maxima from different phases are added.}
\label{tab:mechanism-peak}
\end{table}

\FloatBarrier
\subsection{Analysis of Ping-Pong Correctness Repairs}
\label{sec:targeted-repair-evaluation}

We separately test whether specific known failures can be repaired while retaining ping-pong's low cost. The new variant combines the zero-payload repair of \Cref{sec:zero-payload-repair} with corrected chunk-carry cleanup, canonical coordinate subtraction, and $768$ rounds. It is distinct from the three configurations in the preceding validation study.

The windowed circuit uses $1{,}419$ qubits and $1{,}524{,}503$ static Toffolis, counted from the final emitted circuit. Its mixed-addition version uses $1{,}402$ qubits, so the window interface still adds only $17$. The increase from conservative windowed ping-pong's $1{,}392$ qubits instead comes from arithmetic changes: $32$ additional transcript qubits, six fewer value qubits at the longer schedule's endpoint, and one zero-payload flag. These changes add $27$ qubits independently of the windowing overhead.

The repaired division and multiplication circuits pass zero-payload tests on $64$ denominators, restoring the denominators and clearing temporary registers without detected phase errors. Complete point addition also passes the two previously failing supported zero-slope inputs, each tested over $32$ simulated runs, and a $64$-input test set at both $w=4$ and $w=16$. These results demonstrate a cheaper repair of those particular failures than the canonical reference. They do not establish correctness on all inputs.

After the targeted regressions, we froze the candidate source, emitted stream, input driver, and sampling protocol before drawing a new seed. We evaluate the repaired circuit on $100{,}000$ fresh inputs across nine lookup-table configurations, using the same three-base-point, three-shift construction and sampling rules as \Cref{sec:bounded-validation}, but newly generated public test points and input seeds. It detects no output, phase, or ancilla failures, including on four identity-addend rows. The mean executed count is $1{,}356{,}324$ Toffolis, giving $Q\times T\approx1.925$ billion. Under independent sampling, the zero count gives a one-sided $95\%$ upper bound of $0.00300\%$ on the allocation-weighted failure rate. An independent Python affine oracle checks every expected output, and all case indices and batch totals reconcile. This cohort is not paired with the original three-circuit study. It supports the measured operating point but does not supersede the structured counterexamples or establish a coherent error bound.


 
\FloatBarrier

\FloatBarrier

\subsection{Comparison with Prior Work}
\label{sec:external-pa-comparison}

For leading-order window-model context, Google and Schrottenloher add $16$ address qubits and $3\cdot2^{16}=196{,}608$ lookup operations to their reported arithmetic resources~\cite[Appendix~A.3]{google26}\cite[Tables~1--2]{s26}. Applying that allowance gives the external model points in \Cref{tab:external-pa-comparison}. Small unlookup costs are not fully itemized, so these are not measured all-in kernels. Our rows instead include the implemented lookup, cleanup, address register, and failed inputs; no additional allowance is applied.

Under this allowance, conservative ping-pong uses $70$ fewer qubits and approximately $39\%$ less counted work than Schrottenloher's low-gate construction. It also uses $49$ fewer qubits and approximately $45\%$ less counted work than Google's low-gate caps. The Schrottenloher comparison is measured versus constructed, while the Google comparison is against published upper bounds rather than undisclosed actual costs. Different correctness evidence and counting conventions preclude a formal dominance claim.

The targeted zero-payload repair also remains below the two low-gate model points in width and counted work, at $1{,}419$ qubits and $1.356$ million mean Toffolis. Its improved targeted behavior does not remove the schedule counterexamples, so the same correctness and accounting qualifications apply.

\begin{table}[!htbp]
\centering\footnotesize
\setlength{\tabcolsep}{3pt}
\renewcommand{\arraystretch}{1.12}
\begin{tabular}{@{}>{\raggedright\arraybackslash}p{3.25cm}r>{\raggedright\arraybackslash}p{1.35cm}>{\raggedright\arraybackslash}p{4.0cm}@{}}
\toprule
Circuit & $Q$ & Work (M) & Correctness evidence \\
\midrule
\multicolumn{4}{@{}l}{\textit{Published resources with leading-order window allowance}}\\
Google low-$Q$~\cite{google26} & $\leq1{,}191$ & $\leq2.897$ & ZK-attested $0/9{,}024$ FS \\
Google low-gate~\cite{google26} & $\leq1{,}441$ & $\leq2.297$ & ZK-attested $0/9{,}024$ FS \\
Schrottenloher low-$Q$~\cite{s26} & $1{,}208$ & $\approx2.59$ & Reported $10{,}000$ random successes, block-level simulation \\
Schrottenloher low-gate~\cite{s26} & $1{,}462$ & $\approx2.06$ & Same reported evidence \\
\midrule
\multicolumn{4}{@{}l}{\textit{Measured single-call window-selected kernels, $w=16$}}\\
Windowed Jump-2 & $\freshJumpQ$ & $\freshJumpTM$ & $\freshJumpAny/100{,}000$ fresh cases \\
Windowed ping-pong & $\freshBaseQ$ & $\freshBaseTM$ & $\freshBaseAny/100{,}000$ fresh cases \\
Conservative ping-pong & $\freshConservativeQ$ & $\freshConservativeTM$ & $\freshConservativeAny/100{,}000$ fresh cases \\
\midrule
\multicolumn{4}{@{}l}{\textit{Separate targeted-repair study, $w=16$}}\\
Ping-pong, zero-payload repair & $1{,}419$ & $1.356$ & $0/100{,}000$ new cases \\
\bottomrule
\end{tabular}
\vspace{8pt}
\caption{Window-model resource context under different accounting and correctness conventions. FS denotes circuit-dependent Fiat--Shamir acceptance. Google reports caps on mean executed CCX+CCZ counts. Schrottenloher reports rounded construction counts including CCX, CCZ, and AND gates. Our means count executed CCX+CCZ and include failed inputs. The rows do not form an accuracy-matched frontier.}
\label{tab:external-pa-comparison}
\end{table}

Schrottenloher reports $\epsilon\leq2^{-13.3}$ without a stated confidence level and uses block-level simulation in which some arithmetic is replaced by classical functions. Neither these simulations nor Fiat--Shamir acceptance provides uncertainty bounds directly comparable to those of our fresh-input study. The resource comparisons therefore do not establish a speedup at matched correctness guarantees. Toffoli depth, feed-forward depth, and physical implementation costs also remain unmatched.

\FloatBarrier
\subsection{Reproducibility}
\label{sec:reproducibility}

The companion evidence repository's \href{https://github.com/jieyilong/ecdsafail-circuit-evidence/releases/tag/v1.4.0}{version~1.4.0} preserves the earlier studies and adds the targeted repair, fresh validation, schedule audit, coherent-error analysis, and safegcd references in experiments~10--14~\cite{ecdsafail-evidence-v14}. A \href{https://github.com/jieyilong/ecdsafail-circuit-evidence/blob/d6a93d1baaa2cc472278d2b419ef41d2d4708074/docs/PAPER_MAP.md}{commit-pinned experiment index} maps each result to its records and scripts. Verification reconstructs the earlier $300{,}000$ outcomes and the new $100{,}000$ outcomes from the recorded results without rerunning the circuits. \Cref{app:artifacts} gives source revisions and reproduction commands.

The release includes frozen source snapshots, candidate and operation-stream hashes, fresh input/output ledgers, retained counterexamples, and reproduction scripts. The \href{https://github.com/jieyilong/ecdsafail-circuit-evidence/blob/d6a93d1baaa2cc472278d2b419ef41d2d4708074/docs/REPRODUCTION.md}{versioned reproduction guide} distinguishes record verification from source compilation and circuit execution. The new targeted-repair study remains separate from the earlier three-circuit frozen comparison.

\FloatBarrier

\section{Limitations and Outlook}
\label{sec:limitations}

\paragraph{Correctness and finite schedules.}
The ideal termination bound does not justify the measured round budgets or shrinking widths. The targeted repair removes the tested zero-payload and zero-slope failures while retaining a low resource cost, but explicit denominator and point-pair counterexamples remain. The canonical reference repairs more replay defects at higher cost and still retains the finite value schedule. Neither clean sampled outcomes nor local proofs establish correctness of the complete implementation on every input.

\paragraph{Interfaces and resources.}
The windowed kernels implement single calls, not a complete Shor computation. Testing nine lookup-table configurations does not cover every public point, shifted table, or correlated intermediate state. Initialization, exceptional additions, the full schedule, Fourier operations, and postprocessing remain to be validated together. The retry proxy is not an attack-success estimate. 
Reported counts omit depth, parallelism, memory traffic, measurement latency, and physical error correction, so they do not establish a hardware-independent optimum.

\paragraph{Attribution.}
The circuits combine prior arithmetic with implementation adaptations. The observational challenge record does not isolate a causal advantage from agents or compare models under controlled budgets. AI tools assisted with drafting, source analysis, test drivers, and the termination argument. The authors remain responsible for correctness, attribution, and the final text.

The principal open tasks are canonical replay at competitive cost, justified value widths and round tails, an optimized matched reversible safegcd baseline, and certification of the good subspaces and ideal-state weights required by the composition bound.



%
%

\bibliographystyle{splncs04}
\bibliography{refs}

%
%

\appendix
\renewcommand{\theHsection}{appendix.\Alph{section}}

\section{Contributor Statement}
\label{app:credit-contributors}

\begingroup
\small
\setlength{\emergencystretch}{2em}
\noindent The categories below are nonexclusive. Individual circuit submissions and their provenance remain recorded on the public leaderboard.

\paragraph{\textbf{Paper contributors}.} Jieyi Long led the drafting, synthesis, coordination, and manuscript-wide integration of the paper. Theodore Pender, Zhao Huang, Manuel B. Santos, and Samrendra Kumar Singh contributed major technical and empirical sections, including the quantum-circuit background, circuit-optimization analysis, results, improvement chronology, and Pareto-frontier analysis. Justin Drake, Pierre-Luc Dallaire-Demers, and Bartosz Naskr\k{e}cki contributed technical review and revision, while Francesco Giannicola contributed detailed review, reproducibility study, and editorial refinement.

\paragraph{\textbf{Challenge organizers}.} Gajesh Naik, Gautham Anant, Soubhik Deb, and Justin Drake contributed to the conception, harness setup, organization, infrastructure, administration, or supervision of the challenge.

\paragraph{\textbf{Leaderboard contributors}.} The challenge's progress reflects the cumulative contributions of participants who performed circuit research, implementation, testing, validation, and submission work. Within this broader collective effort, several key optimization techniques and plateau-breaking improvements were identified by Theodore Pender, BitWonka, Bartosz Naskr\k{e}cki, Joe Doyle, Matt Zweil, and Gajesh Naik, working in collaboration with their AI agents. These particularly consequential contributions complemented the many other advances, refinements, and validation efforts made by the wider leaderboard community. \textbf{Named leaderboard contributors}: Akash Balasubramani, BitWonka, John Boyer, Xavier Butler, Pierre-Luc Dallaire-Demers, Bereket Dereje, Mo Dong, Joe Doyle, Oli Freuler, Vasily Gnuchev, Alexander Hertlein, Zhao Huang, Anto Joseph, Robert Kodra, Edison Lee, Jackie Chia-Hsun Lee, Lucas Levy, Alan Li, Jieyi Long, Gajesh Naik, Bartosz Naskr\k{e}cki, Jordan Newman, Ruben Marcus Luz Paschoarelli, Shaan Patel, Theodore Pender, JT Rose, Manuel B. Santos, Samrendra Kumar Singh, Okechukwu Wisdom. \textbf{Other Leaderboard Contributors (Github ID)}: 0xfoobar,\allowbreak{} 0xnirlin,\allowbreak{} 0xpg,\allowbreak{} 10d9e,\allowbreak{} abipalli,\allowbreak{} aburan28,\allowbreak{} agilexmarketing,\allowbreak{} alecstecpe-oss,\allowbreak{} AnalyticETH,\allowbreak{} Angel-Arevalo,\allowbreak{} anshu4321,\allowbreak{} anupsv,\allowbreak{} austinamissah,\allowbreak{} ayushgupta0610,\allowbreak{} blocksek,\allowbreak{} borelien,\allowbreak{} bulengerk,\allowbreak{} bxue-l2,\allowbreak{} chainyoda,\allowbreak{} Chris-Moller,\allowbreak{} cryptogakusei,\allowbreak{} darius-oai,\allowbreak{} DavetheSlayer,\allowbreak{} davidxia3,\allowbreak{} dennisonbertram,\allowbreak{} eferbarn,\allowbreak{} eiso,\allowbreak{} factory-sagar,\allowbreak{} fortunexbt,\allowbreak{} Gaijin-01,\allowbreak{} gopikannappan,\allowbreak{} harrigan,\allowbreak{} hydrogenbond007,\allowbreak{} inishc,\allowbreak{} IntrepidEnki,\allowbreak{} jacksonhblau,\allowbreak{} jamesotter99,\allowbreak{} Jamgiter,\allowbreak{} Joshgriste,\allowbreak{} josusanmartin,\allowbreak{} jtaroreh,\allowbreak{} junpenglao,\allowbreak{} kamilsa,\allowbreak{} kunalvg,\allowbreak{} lucasabu1988,\allowbreak{} Luigy-Lemon,\allowbreak{} Lulu-Zhou-EigenLabs,\allowbreak{} M3kko,\allowbreak{} makimakiver,\allowbreak{} megabyte0x,\allowbreak{} mmurrs,\allowbreak{} mochimodev,\allowbreak{} mpjunior92,\allowbreak{} nandy-technologies,\allowbreak{} nikhiljha,\allowbreak{} nnn-gif,\allowbreak{} novoyd,\allowbreak{} pavvann,\allowbreak{} phileigenlabs,\allowbreak{} philuponcrypto,\allowbreak{} pq-cybarg,\allowbreak{} prasiddhnaik,\allowbreak{} pschork,\allowbreak{} ptrdsh,\allowbreak{} runeape-sats,\allowbreak{} rwnq8,\allowbreak{} Sam-Scolari,\allowbreak{} Sampriv,\allowbreak{} saucegodbased,\allowbreak{} simonik11,\allowbreak{} skb93,\allowbreak{} solimander,\allowbreak{} stevenhao,\allowbreak{} steventhornton,\allowbreak{} swadhin,\allowbreak{} Tasfia-17,\allowbreak{} tekkac,\allowbreak{} TeS44,\allowbreak{} tmid1,\allowbreak{} twangodev,\allowbreak{} twmmason,\allowbreak{} vatsalkshah,\allowbreak{} vineetguptadev,\allowbreak{} wh1sky02,\allowbreak{} xadenryan,\allowbreak{} ygboucherk,\allowbreak{} YQ-Wang,\allowbreak{} yudduy,\allowbreak{} yudongcao,\allowbreak{} zeeshan8281,\allowbreak{} zigtur,\allowbreak{} zpano,\allowbreak{} zuiris.

\endgroup

\section{Artifact Index and Reproduction}
\label{app:artifacts}

All artifacts are indexed in \href{https://github.com/jieyilong/ecdsafail-circuit-evidence/releases/tag/v1.4.0}{evidence release v1.4.0}~\cite{ecdsafail-evidence-v14}. Its \href{https://github.com/jieyilong/ecdsafail-circuit-evidence/blob/d6a93d1baaa2cc472278d2b419ef41d2d4708074/docs/PAPER_MAP.md}{commit-pinned map} links each result to frozen source, inputs, and analysis, including the separately archived shared-corpus comparisons. Circuit-source revisions and evidence-release revisions are distinct. Earlier results are preserved, not pooled with later studies.

\subsection{Validation Records}
\label{app:fresh-artifacts}

Experiment~02 contains the original three-circuit study: $100{,}000$ shared inputs and $300{,}000$ outcomes. Its freeze manifest predates input generation, and rejected-case identities were not retained. The table below reports this study only.
\begin{table}[H]
\centering\small
\setlength{\tabcolsep}{5pt}
\begin{tabular}{@{}lrrrr@{}}
\toprule
Lookup configuration & Inputs & Original PP & Conservative PP & Jump-2 \\
\midrule
$B=G,\ i=0$ & $11{,}111$ & $8$ & $0$ & $18$ \\
$B=G,\ i=128$ & $11{,}111$ & $2$ & $0$ & $20$ \\
$B=G,\ i=240$ & $11{,}111$ & $6$ & $0$ & $27$ \\
$B=P_1,\ i=0$ & $11{,}111$ & $6$ & $0$ & $26$ \\
$B=P_1,\ i=128$ & $11{,}111$ & $5$ & $0$ & $19$ \\
$B=P_1,\ i=240$ & $11{,}111$ & $2$ & $0$ & $19$ \\
$B=P_2,\ i=0$ & $11{,}111$ & $6$ & $0$ & $30$ \\
$B=P_2,\ i=128$ & $11{,}111$ & $1$ & $0$ & $31$ \\
$B=P_2,\ i=240$ & $11{,}112$ & $3$ & $0$ & $25$ \\
\bottomrule
\end{tabular}
\caption{Any-channel failures in the original study, with $\mathcal T_B^{(i)}[j]=[j2^i]B$. Each row uses the same inputs for all three circuits.}
\label{tab:fresh-strata}
\end{table}

Experiment~10 separately tests the targeted repair on $100{,}000$ fresh inputs across nine lookup-table configurations, with no detected output, phase, or ancilla failures. It uses $1{,}419$ qubits, $1{,}524{,}503$ static Toffolis, and $1{,}356{,}324.32985$ mean executed Toffolis. The matched mixed circuit uses $1{,}402$ qubits: windowing adds $17$, independently of the arithmetic repairs. Structured failures remain (\Cref{app:round-counterexamples}).

\subsection{Reproduction}

Run \texttt{python3 scripts/verify.py} from the repository root to reconstruct both studies, check hashes and diagnostics, and recompute the new study's table entries and point sums. The \texttt{-{}-oracle} option checks earlier points with OpenSSL. These record checks do not execute quantum circuits. The \href{https://github.com/jieyilong/ecdsafail-circuit-evidence/blob/d6a93d1baaa2cc472278d2b419ef41d2d4708074/docs/REPRODUCTION.md}{reproduction guide} gives separate build and execution commands, including regeneration of omitted gate streams and binaries.

\section{Termination and Local Replay Analysis}
\label{app:replay-analysis}

\subsection{A Conservative Termination Bound}

\begin{lemma}[Termination of the ordinary ping-pong recurrence]
\label{lem:ping-pong-termination}
For odd coprime signed integers $s,t$ with $\max(|s|,|t|)<2^n$, $n\geq1$, \Cref{eq:ping-pong-forward} with alternating targets reaches $|s|=|t|=1$ within $3n(n+1)$ rounds.
\end{lemma}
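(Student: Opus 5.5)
The plan is to track the magnitudes $a=|s|$ and $b=|t|$ through a potential argument. The starting observation is that every update $t'=(t\pm s)/2$ in \Cref{eq:ping-pong-forward} has magnitude either $(|t|+|s|)/2$ or $\bigl||t|-|s|\bigr|/2$; which one occurs depends on the signs and on $e_i$. Call the first an \emph{average step} and the second a \emph{difference step}. Let $M=\max(|s|,|t|)$.

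Both kinds of step give $|t'|\le M$, so $M$ never increases. By \Cref{lem:ping-pong-value-update}, both operands stay odd and the pair stays coprime throughout. The terminal condition also behaves well. If $|s|=|t|$, coprimality and oddness force $|s|=|t|=1$. Once $|s|=|t|=1$, every later update yields $(\pm1\pm1)/2\in\{0,\pm1\}$, and oddness rules out $0$. So the signed-unit set is absorbing, and it suffices to bound the first time it is reached.

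The argument then runs in phases, each of which shrinks $M$ by a constant factor.

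\textbf{Runs of average steps.} Because the target alternates, an average step replaces the updated magnitude by the mean of the two current magnitudes. The gap $\delta=\bigl||s|-|t|\bigr|$ therefore halves exactly, since $\bigl|(a+b)/2-a\bigr|=|a-b|/2$. The gap is an even integer because both operands are odd, and it is at most $2M$. Hence a run of consecutive average steps has length at most about $n+1$ before $\delta$ reaches $0$, which is termination. So within $n+1$ rounds either the pair terminates or a difference step occurs.

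\textbf{Difference steps.} A difference step produces $|t'|\le M/2$. In the next round the other register is the target and $t'$ is its source. Its new magnitude is therefore at most $(M+M/2)/2=3M/4$ if the step is an average step, and at most $M/2$ if it is a difference step. After these two rounds both magnitudes are at most $3M/4$. Since $M$ is nonincreasing, this reduction persists.

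\textbf{Counting rounds.} Each phase costs at most $(n+1)+2$ rounds and multiplies $M$ by at most $3/4$. Starting from $M<2^n$, at most $\lceil n\log_{4/3}2\rceil\approx 2.41n$ phases occur before $M=1$. This gives roughly $2.41n(n+3)$ rounds in total.

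The step I expect to be the main obstacle is tightening this count to exactly $3n(n+1)$ for every $n\ge1$. The crude estimate already suffices for moderately large $n$, but the constants are uncomfortable for small $n$. To fix this I would try three refinements. First, bound the average-run length in terms of the current $M$ (about $\log_2 M+1$) rather than $n$. Second, note that a run which drives $\delta$ small also forces both magnitudes toward each other, so the next difference step gives a much larger drop. Third, use the potential $\log_2(|s|\,|t|)$ instead of $M$, which decreases by at least $1$ on every difference step. If the clean inequality still fails for a few tiny $n$, I would check those cases directly, since only finitely many odd coprime pairs exist there.
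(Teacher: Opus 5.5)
Your argument follows the paper's proof step for step. You classify updates as average or difference steps according to whether $|s|\equiv|t|\pmod 4$, note that the maximum $M$ never increases, bound a run of average steps by halving of the gap, and show that a difference step plus one more round shrinks $M$ by $3/4$. The gap is in the constant, and it is real. With phases of up to $(n+1)+2=n+3$ rounds, your total $\lceil n\log_{4/3}2\rceil\,(n+3)$ exceeds $3n(n+1)$ for every $n\le 8$ (for example $12>6$ at $n=1$ and $220>216$ at $n=8$). So the lemma as stated is not yet proved, and the proposal leaves this case open.

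The missing step is a sharper bound on a run of average steps. First, halving never drives a positive even gap to $0$. From a nonterminal state, equal magnitudes arise only from a difference step on magnitudes $1$ and $3$. A run therefore ends because a difference step is forced, not because $\delta$ vanishes. Second, $\delta=\bigl||s|-|t|\bigr|\le M-1<2^n$, not merely $\delta\le 2M$. An average step occurs exactly when $4\mid\delta$. If $k$ consecutive average steps occur, the $k$-th requires $4\mid\delta_0/2^{k-1}$, so $2^{k+1}\le\delta_0<2^n$ and hence $k\le n-2$; the paper states the looser bound $n-1$. A block consisting of the averages, the difference step, and the following round thus takes at most $n+1$ rounds and shrinks $M$ by a factor $3/4$. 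Since $(3/4)^3=27/64<1/2$, $3n$ blocks would push a nonterminal maximum below $2^n\cdot 2^{-n}=1$. Equivalently, $\lceil n\log_{4/3}2\rceil\le 3n$ for all $n\ge1$, so your own phase count yields exactly $3n(n+1)$ with no small-case enumeration.

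Your three proposed refinements are therefore unnecessary, and the third one is false. When $|t|<|s|$, a difference step sets $|t'|=(|s|-|t|)/2$, which can exceed $|t|$. For $(s,t)=(99,1)$ the update gives $t'=-49$, so $|s|\,|t|$ grows from $99$ to $99\cdot 49$.
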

\begin{proof}
Write $a=|s|$, $b=|t|$. Regardless of operand signs, the updated magnitude is $(a+b)/2$ when $a\equiv b\pmod4$ and $|a-b|/2$ otherwise. Call these average and difference steps. The maximum never increases, and equal magnitudes imply the signed-unit endpoint by coprimality. Otherwise, each consecutive average step halves the positive even difference $|a-b|<2^n$, so at most $n-1$ average steps precede a difference step. Let $M,m$ be the larger and smaller magnitudes before that difference step. Its target becomes $(M-m)/2$. After the next round, which updates the other register, the maximum is at most $M/2$ if the first target was larger, and $3M/4$ otherwise. Thus at most $n+1$ rounds reduce the maximum by a factor at most $3/4$, unless termination occurs first. Three such blocks reduce it by more than half. After $3n$ blocks, a nonterminal positive maximum would be less than $2^n2^{-n}=1$, a contradiction.
\end{proof}
For $n=256$ this gives $197{,}376$ ordinary rounds over exact integers. It justifies neither the short circuit budgets nor shrinking widths; \Cref{app:round-counterexamples} gives explicit violations.

\subsection{Initialization and Cell Accounting}
\label{app:replay-cell-accounting}

For canonical $x$, set $a_0=x\bmod2$, $a_1=\lfloor x/2\rfloor\bmod2$, and $h=\lfloor x/2\rfloor$. The specialized first transition produces $h-p+a_1p+a_0(p+1)/2$, an odd integer congruent to $x/2$ modulo $p$. Its sign recovers $a_1$, while the transcript stores $a_0$. The first replay only halves, the next uses a seed cell, and ordinary inverse cells begin at index two. Hence the $704$-round circuit has $702$ ordinary inverse cells.

The \href{https://github.com/jieyilong/ecdsafail-circuit-evidence/blob/d6a93d1baaa2cc472278d2b419ef41d2d4708074/supporting/replay-analysis/README.md}{cell-accounting guide} pins source \texttt{897dda2b0cf2}, the one-Toffoli majority option, and counts before whole-circuit optimization. The common addition and phase comparisons cost $336$ Toffolis. At matched $56$-bit fold width, unfused and fused inverse cells cost $446/393$ Toffolis and $87/88$ workspace qubits, excluding payloads and sign control. The forward fused cell costs $393$ Toffolis and $87$ workspace qubits. Matching fold width does not establish identical failure sets. Exact scalar checks support the algebra, not emitted phases or representations.

\subsection{Boundary and Carry Diagnostics}
\label{sec:boundary-mechanisms}
\label{app:targeted-followup}

\href{https://github.com/jieyilong/ecdsafail-circuit-evidence/tree/d6a93d1baaa2cc472278d2b419ef41d2d4708074/experiments/07-boundary-diagnosis}{Experiment~07} separates representation and phase defects. Negation by complementation can encode zero as word $p$. Among $64$ zero-payload denominators, division/multiplication return $p$ in $28/46$ cases and raise $19/26$ phase flags, despite restored denominators and clean final ancillas. Resetting a nonzero temporary word can hide a cleanup failure while introducing phase. These are component diagnostics, not population estimates.

For an unsigned chunk with source $a$, output $z$, and incoming carry $c$, correct cleanup requires
\begin{equation}
 c_{\mathrm{out}}=[z<a]\lor(c\land[z=a]).
 \label{eq:chunk-carry-boundary}
\end{equation}
The omitted equality term causes input-dependent phase even when $0-1$ returns canonical $p-1$. Widening a strict comparison does not repair this predicate. Separate post-fold overflow and halving-underflow defects remain.

\href{https://github.com/jieyilong/ecdsafail-circuit-evidence/tree/d6a93d1baaa2cc472278d2b419ef41d2d4708074/experiments/08-followup-diagnostics}{Experiment~08} supplies two supported zero-slope witnesses: for a nontrivial cube root $\beta\in\F_p$, take $A=G$ and $R=(\beta x_G,y_G)$ or $(\beta^2x_G,y_G)$. Neither is an excluded exceptional point. At $w=4$, both original and conservative circuits return correct coordinates, but raise $27/35$ phase flags over $32$ measurement lanes per point. The $64$ lanes represent two inputs, not independent sampled pairs.

The same experiment explains smoke input $50$: its $616$- and $638$-round walks fit the original widths, but inverse-replay round $271$ loses a carry beyond bit $55$ in a $+2F$ correction, making the coefficient $2^{56}$ too small. A raw-word model matches all $702$ ordinary inverse boundaries. The default mixed circuit also fails, excluding disabled optimization passes as the cause; the conservative $w=4$ circuit passes.

\subsection{Targeted and Canonical Repairs}
\label{sec:canonical-reference}

\href{https://github.com/jieyilong/ecdsafail-circuit-evidence/tree/d6a93d1baaa2cc472278d2b419ef41d2d4708074/experiments/10-targeted-repair}{Experiment~10} retains approximate replay but adds zero-payload sign masking, equality-aware carry cleanup, canonical coordinate subtraction, and $768$ rounds. Both $64$-denominator zero-payload tests pass, as do the two zero-slope points and $64$ smoke inputs at $w=4$ and $16$. Its lower-cost operating point is in \Cref{app:fresh-artifacts}. Flag erasure requires the actual input and output words to preserve the zero predicate; the \href{https://github.com/jieyilong/ecdsafail-circuit-evidence/tree/d6a93d1baaa2cc472278d2b419ef41d2d4708074/experiments/14-repair-audit}{independent audit} does not certify the remaining approximate arithmetic.

The distinct \href{https://github.com/jieyilong/ecdsafail-circuit-evidence/tree/d6a93d1baaa2cc472278d2b419ef41d2d4708074/experiments/09-canonical-reference}{canonical reference} retains $736$ value rounds but uses payloads in $\{0,\ldots,p-1\}$. Negation acts only on nonzero words. Addition uses a $257$-bit sum, full correction, and reduction-flag cleanup from $[\mathrm{result}<\mathrm{source}]$. Doubling and halving are
\begin{equation}
 \mathsf{dbl}(t)=2t-p[2t\geq p],\qquad
 \mathsf{half}(t)=\frac{t+p(t\bmod2)}{2}.
\end{equation}
Their flags are recovered from output parity and $[\mathsf{half}(t)\geq(p+1)/2]$, respectively. Full-width carry corrections and inverse canonical addition implement the three coordinate subtractions.

\begin{proposition}[Conditional canonical replay]
If the value circuit coherently computes and uncomputes its transcript and terminal signed units with correct relative phases, canonical replay implements division and multiplication on every canonical payload, including zero, with clean replay workspace.
\end{proposition}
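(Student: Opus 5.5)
The proof reduces to checking that each canonical replay cell is a bijection on $\{0,\ldots,p-1\}$ that computes the intended field map and cleans its own flags. The linear-algebra argument of \Cref{prop:ping-pong-correctness} then carries over, with the canonical-encoding hypothesis discharged rather than assumed.

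\textbf{Step 1: conditioning.} Work conditioned on a fixed basis transcript $\tau$ and terminal signs $(\eta_0,\eta_1)$. By hypothesis, the value circuit computes and uncomputes these coherently with correct relative phases. Since replay is controlled only by $\tau$ and the signs, it suffices to show that for each fixed transcript the replay acts as a permutation of canonical words with no residual phase or workspace. Coherence over superpositions of $x$ and $y$ then follows by linearity.

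\textbf{Step 2: local cell lemmas.} I will verify three cell lemmas on canonical inputs $t\in\{0,\ldots,p-1\}$.

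\begin{enumerate}
\item[(i)] \emph{Negation.} Negation is applied only to nonzero words, so it maps $0\mapsto0$ and $t\mapsto p-t$ otherwise. The word $p$ never appears.
\item[(ii)] \emph{Addition.} The $257$-bit sum $a+t<2p$ is reduced by subtracting $p$ exactly when the sum is at least $p$. The output lies in $[0,p)$. The reduction flag equals $[\mathrm{result}<\mathrm{source}]$: if reduced, the result is $a+t-p<a$ because $t<p$; if not reduced, the result is at least $a$. Hence the flag is a function of the output and source and can be uncomputed exactly.
\item[(iii)] \emph{Doubling and halving.} $\mathsf{dbl}(t)=2t-p[2t\ge p]$ lies in $[0,p)$ and its flag equals the output parity, because $p$ is odd. $\mathsf{half}(t)$ is an integer in $[0,p)$, and its flag $t\bmod2$ equals $[\mathsf{half}(t)\ge(p+1)/2]$: odd $t$ gives $(t+p)/2\ge(p+1)/2$, while even $t$ gives $t/2\le(p-1)/2$.
\end{enumerate}

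Each flag is therefore recovered from live data. Its cleanup is then an exact reversible XOR of a computed predicate, not a measured value with an unexplained phase. Composing these gives, for each forward and reverse cell, the field update of \Cref{eq:ping-pong-reverse-payload} and its inverse, acting bijectively on canonical words.

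\textbf{Step 3: global composition.} The composition of the cells is $D_x$ or $D_x^{-1}$ over $\F_p$, realized on canonical representatives. \Cref{eq:ping-pong-division,eq:ping-pong-multiplication} hold exactly as in the proof of \Cref{prop:ping-pong-correctness}, including $y=0$: linearity gives $D_x(0,0)=(0,0)$, and the canonical cells keep zero encoded as word $0$. Terminal sign correction uses canonical negation, so both registers hold the same canonical word. The bitwise XOR therefore clears the duplicate, and zero registers are genuinely all-zero. The coordinate subtractions use full carry corrections and inverse canonical addition, so the same argument applies to them.

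\textbf{Main obstacle.} The hardest part is the flag-recovery predicates: showing each is an exact function of the post-operation words for \emph{all} canonical inputs, including boundaries such as $t=0$, $t=p-1$ and $a+t=p$. This is exactly where \Cref{eq:chunk-carry-boundary} exposed failures in the approximate circuit. I would handle it by checking these boundary cases explicitly within the case analysis above.
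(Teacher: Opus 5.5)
Your proposal is correct and takes the same route as the paper's proof. Each primitive preserves canonical encoding and recovers its flag from unchanged controls and final values; your checks of $[\mathrm{result}<\mathrm{source}]$, output parity, and $[\mathsf{half}(t)\geq(p+1)/2]$ are exactly the paper's predicates. The field identities then hold as word identities, so the XOR and zero-register cleanup are valid, and the hypothesis supplies value-register restoration. The only difference is phrasing: the paper describes erasure as measurement followed by exact carry-phase corrections, while you describe it as unitary recomputation of the predicate. Both work once the predicate is an exact function of live data.
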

\begin{proof}
Each primitive preserves canonical encoding, reconstructs its flags from unchanged controls and final values, and applies exact carry-phase corrections. The field identities thus hold as word identities: division leaves equal sign-corrected words for XOR cleanup, and inverse replay clears the redundant coefficient. The hypothesis supplies value-register restoration.
\end{proof}

Replay replacement alone leaves a separate final-$y$ subtraction phase defect; canonical coordinate subtraction repairs it. The combined reference passes $576$ cell cases, both zero-payload component tests, and the zero-slope and smoke tests at both window widths. Its $w=16$ cost is $1{,}804$ qubits and $5{,}188{,}043$ static Toffolis. A separate frozen $4{,}096$-input pilot has no detected failures and mean $T\approx5.188$ million. Noncanonical words, finite value widths, the round budget, and inherited square arithmetic remain outside the proposition's guarantee.

\section{Follow-Up Arithmetic and Coherence Analysis}
\label{app:oral-followup}

\subsection{A State-Dependent Coherent Error Bound}
\label{app:coherent-error}

Let $\mathcal E(\rho)=\sum_mK_m\rho K_m^\dagger$ be a completely positive trace-preserving implementation and $U$ an ideal isometry on a common space, including exceptional encodings. The Kraus index includes all discarded registers. For a certified input projector $\Pi$, assume $K_m\Pi=\gamma_mU\Pi$, with input-independent $\gamma_m$ and $\sum_m|\gamma_m|^2=1$. Correct basis outputs and input-independent outcome probabilities are insufficient: Kraus operators $I/\sqrt2,Z/\sqrt2$ satisfy these weaker conditions but dephase $\ket{+}$.

For any input, including an external reference, put $q=\operatorname{Tr}[(I-\Pi)\rho]$, with reference identities implicit. Purify $\rho$ as $\ket\Psi=\ket g+\ket a$, where $\ket g=\Pi\ket\Psi$ and $\|a\|^2=q$. The isometries $V=\sum_mK_m\otimes\ket m$ and $V_0=U\otimes\sum_m\gamma_m\ket m$ agree on $g$, so
\begin{equation}
 \|(V-V_0)\ket\Psi\|=\|(V-V_0)\ket a\|\leq2\sqrt q.
\end{equation}
Trace distance is bounded by this vector norm and contracts when the environment is discarded. Replacing one call at a time, with ideal prefixes and implemented suffixes, gives $\min\{1,2\sum_i\sqrt{q_i}\}$ as in \Cref{eq:state-dependent-composition}. Each $q_i$ concerns the ideal pre-call state; no independence between calls is assumed. \href{https://github.com/jieyilong/ecdsafail-circuit-evidence/tree/d6a93d1baaa2cc472278d2b419ef41d2d4708074/experiments/11-coherent-error}{Experiment~11} provides the full proof and checks.

For uniformly prepared disjoint windows, $R_i=[c+\sum_{\ell<i}a_\ell j_\ell]G$, with $a_\ell=2^{s_\ell}$ or $k2^{s_\ell}$. For a computational-basis good set, $q_i$ counts bad labeled prefixes with multiplicity, not uniformly sampled curve points. After direct $16$-bit initialization in the low-to-high $G$ schedule, $R=[j_0]G$ and $A=[2^{16}j_1]G$ give unsupported $R=\mathcal O,A\neq\mathcal O$ weight $(2^{16}-1)/2^{32}$, not $1/r$. Different offsets or schedules change this value. The complete arithmetic bad set and its full-Shor weights remain uncertified.

\subsection{Schedule and Width Evidence}
\label{app:round-counterexamples}

The retained ten-million-denominator study has mean convergence $621.533239$, maximum $743$, and exceedance counts $1{,}535,6,0,0$ at budgets $704,736,768,800$. Original/conservative width misses are $661/0$ within their respective budgets. A zero count gives a pointwise one-sided $95\%$ upper bound near $3.00\times10^{-7}$ under independent sampling, not a universal bound. A post-hoc replay model finds no specified $40/72/48$-bit guard disagreements on $99{,}997$ nonidentity cases, but does not certify all phase cleanup or coordinate arithmetic.

The independent \href{https://github.com/jieyilong/ecdsafail-circuit-evidence/tree/d6a93d1baaa2cc472278d2b419ef41d2d4708074/experiments/12-schedule-audit}{schedule audit} retains the following exact witnesses, including initialization:
\begin{center}
\small
\setlength{\tabcolsep}{6pt}
\begin{tabular}{@{}lrr@{}}
\toprule
Denominator & Convergence transitions & First margin-$20$ width miss \\
\midrule
$1$ & $512$ & $177$ \\
$3$ & $1{,}135$ & $171$ \\
$2^{255}$ & $1{,}239$ & $80$ \\
\bottomrule
\end{tabular}
\end{center}
Width indices are zero-based. Since $|t'|\leq(|s|+|t|)/2$, ideal stored magnitudes stay at most $p$ and ordinary sums at most $2p$; signed $258$-bit arithmetic suffices for those sums, not the tapered circuit. A separate $10{,}000$-denominator sample has two walks exceeding $704$ and none exceeding $736/768$, with no margin-$20$ width misses through $768$. These studies are not pooled.

The witnesses lift to six supported affine pairs tested over eight measurement lanes each. The targeted $w=4$ repair fails all $48$ output lanes, with $25$ phase flags and clean final ancillas. Width violations precede round exhaustion for the long walks, so full-call failures do not isolate the mechanisms. The separate $0/100{,}000$ random result does not remove these counterexamples.

\subsection{Reversible Safegcd Reference}
\label{app:safegcd-reference}

\href{https://github.com/jieyilong/ecdsafail-circuit-evidence/tree/d6a93d1baaa2cc472278d2b419ef41d2d4708074/experiments/13-safegcd-reference}{Experiment~13} implements complete reversible division and multiplication. Safegcd uses the integer-$\delta$ recurrence and proved $741$-round bound at $n=256$~\cite{bernstein-yang19}, recording parity and conditional-swap decisions. Both backends use full signed widths, identical canonical payload primitives, and measurement-free Cuccaro arithmetic. Counts include initialization, replay, terminal correction, fixed input/output wires, and cleanup, but exclude the point-addition shell and lookup:
\begin{center}
\small
\setlength{\tabcolsep}{6pt}
\begin{tabular}{@{}lrrrr@{}}
\toprule
Reference & Rounds & $Q$ & Static Toffolis & Failures \\
\midrule
Safegcd & $741$ & $3{,}300$ & $7{,}501{,}932$ & $0/128$ \\
Ping-pong & $768$ & $2{,}320$ & $6{,}289{,}400$ & $12/128$ \\
Ping-pong & $1{,}536$ & $3{,}088$ & $12{,}574{,}712$ & $0/128$ \\
\bottomrule
\end{tabular}
\end{center}
Failures are per direction on the same denominator--payload cases, including zero payloads and long walks. All $12$ failures have incorrect outputs and nonzero scratch, but no phase flags. The $1{,}536$-round budget is not a proved all-input bound. Each construction also passes $5{,}196$ cases per direction over eight small primes with its stated test schedule; safegcd has additional coherent-state checks at $p=3,5,7,11$.

These unoptimized costs reverse their ordering when the ping-pong budget increases and establish no advantage over optimized safegcd. A separate $w=4$ shell test passes $64$ common inputs for both backends, but uses different payload lowering and therefore does not isolate the recurrence.

\end{document}